\documentclass[a4paper,UKenglish,cleveref, autoref, thm-restate]{lipics-v2021}

\usepackage{algorithm} 

\newlength{\Oldarrayrulewidth}
\newcommand{\Cline}[2]{%
  \noalign{\global\setlength{\Oldarrayrulewidth}{\arrayrulewidth}}%
  \noalign{\global\setlength{\arrayrulewidth}{#1}}\cline{#2}%
  \noalign{\global\setlength{\arrayrulewidth}{\Oldarrayrulewidth}}}

  \newcolumntype{?}{!{\vrule width 1.25pt}}

  \usepackage{algorithmicx}

\pdfoutput=1 
\hideLIPIcs  


\bibliographystyle{plainurl}

\title{Winning Without Observing Payoffs: Exploiting Behavioral Biases to Win Nearly Every Round} 

\titlerunning{Winning Without Observing Payoffs}

\author{Avrim Blum}{Toyota Technological Institute at Chicago, IL, USA}{avrim@ttic.edu}{https://orcid.org/0000-0003-2450-5102}{The author was supported in part by the National Science Foundation under grant CCF-2212968, and by ECCS-2216899.}

\author{Melissa Dutz}{Toyota Technological Institute at Chicago, IL, USA}{melissa@ttic.edu}{https://orcid.org/0009-0008-0971-8285}{The author was supported in part by an NSF CSGrad4US Fellowship.}

\authorrunning{A. Blum and M. Dutz} 

\Copyright{Avrim Blum and Melissa Dutz} 

\ccsdesc{Theory of computation~Algorithmic game theory} 

\keywords{Game theory, Behavioral bias} 


\relatedversion{} 


\funding{}


\nolinenumbers 

\EventEditors{Venkatesan Guruswami}
\EventNoEds{1}
\EventLongTitle{15th Innovations in Theoretical Computer Science Conference (ITCS 2024)}
\EventShortTitle{ITCS 2024}
\EventAcronym{ITCS}
\EventYear{2024}
\EventDate{January 30 to February 2, 2024}
\EventLocation{Berkeley, CA, USA}
\EventLogo{}
\SeriesVolume{287}
\ArticleNo{18}

\begin{document}

\maketitle

\begin{abstract}
Gameplay under various forms of uncertainty has been widely studied. Feldman et al. \cite{feldman2010playing} studied a particularly low-information setting in which one observes the opponent's actions but no payoffs, not even one's own, and introduced an algorithm which guarantees one's payoff nonetheless approaches the minimax optimal value (i.e., zero) in a symmetric zero-sum game. Against an opponent playing a minimax-optimal strategy, approaching the value of the game is the best one can hope to guarantee. However, a wealth of research in behavioral economics shows that people often do not make perfectly rational, optimal decisions. Here we consider whether it is possible to actually win in this setting if the opponent is behaviorally biased. We model several deterministic, biased opponents and show that even without knowing the game matrix in advance or observing any payoffs, it is possible to take advantage of each bias in order to win nearly every round (so long as the game has the property that each action beats and is beaten by at least one other action). We also provide a partial characterization of the kinds of biased strategies that can be exploited to win nearly every round, and provide algorithms for beating some kinds of biased strategies even when we don't know which strategy the opponent uses. 
\end{abstract}

\section{Introduction}
Game theory has traditionally assumed all agents are perfectly rational utility maximizers \cite{Osborne1994}, but behavioral economics shows this assumption is often unsound; people often deviate from ``optimal'' behavior in predictable ways, exhibiting behavioral biases such as loss aversion (exhibiting a higher sensitivity to losses than equivalent gains) and confirmation bias (interpreting information in a way which favors existing beliefs) \cite{kahneman2011thinking, cartwright2018behavioral}. These insights initiated the study of behavioral game theory, which has demonstrated that people employ consistent ``irrational'' strategies in wide-ranging strategic settings (see \cite{2011behavioralgttext}). Given the evidence that people often use behavioral strategies in games, it is important to understand the implications of this in algorithmic interactions (see, e.g., \cite{halpernalgrationality, AGARWAL2023119117, securitygames, learningexploitationbias}).

In this work, we are interested in the extent to which behavioral biases are vulnerable to exploitation in competitive games. To study this, we consider symmetric, repeated, two-player, zero-sum games, and find that even an uninformed opponent who does not know the game or observe any payoffs can take advantage of a wide variety of biased strategies to beat the biased player in nearly every round.  

The setting we consider here is based on work by Feldman et al. \cite{feldman2010playing}. Gameplay under various forms of uncertainty is a focus of much research interest; for example, the bandit setting, in which a player observes payoffs after each round but not the opponent's play, has been widely studied (see \cite{auer2002nonstochastic, predlearninggamesbook}). Feldman et al. introduced a complementary setting, in which the player observes the opponent's play but no payoffs. No assumptions are made about the opponent, who may have full knowledge of the game. In particular, Feldman et al. consider symmetric, two-player, repeated games. They introduce a ``copycat'' algorithm which guarantees a payoff approaching the value of the game, which is the best bound one could hope to achieve against an opponent playing a minimax-optimal strategy. For a symmetric zero-sum game, the value of the game is zero, so their algorithm guarantees a payoff approaching zero in that case. The Copycat algorithm basically works by ensuring that for any pair of actions $a_1$ and $a_2$, the number of rounds in which the uninformed player plays $a_1$ and the opponent plays $a_2$ is not very far from the number of rounds in which the player plays $a_2$ and the opponent plays $a_1$. In this way, the uninformed player is able to achieve roughly the value of the game without knowing (or learning) anything about payoffs. 

The setting introduced by Feldman et al. \cite{feldman2010playing} is a particularly interesting setting in which to study behavioral bias, because it allows us to consider to what extent one can exploit an opponent's biases when the opponent's behavior is essentially the only information one has. We consider the same basic setting, with a few additional restrictions. Like Feldman et al., we consider repeated, two-player, zero-sum, symmetric games in a setting where one does not have a priori knowledge of the game matrix and does not observe payoffs, but does observe the opponents actions. We also assume the opponent is deterministic and plays according to some behaviorally-biased strategy. For simplicity, we consider games where payoffs are limited to \{1, 0, -1\} (just wins, ties and losses). We assume no action is unbeatable and each action beats at least one other action.

There are two technical components to beating a biased opponent in this setting: one needs to learn to predict the opponent's actions, and one needs to learn a best response to those actions the opponent plays. We note that it is not sufficient to just be able to predict accurately. For example, one natural idea which does not work (even if one is willing to take exponential time) is to simply find the lexicographically first game matrix consistent with all the opponent's actions so far and their behavioral strategy, and then play a best response to the action predicted by that game matrix. It is possible that two game matrices, $M1$ and $M2$, could be consistent with the opponent's play, but playing best responses with respect to $M1$ could result in losing most  rounds (and tying in the remaining rounds) if the true game matrix is $M2$. We show such an example in section \ref{predicting_insufficient}.

We model several deterministic, behaviorally-biased opponents, and show how to exploit each bias to win nearly every round. We also give a partial characterization of the kinds of behavioral strategies we can exploit in general, and show that in some cases, we can also win nearly every round against a biased opponent even if we do not know which behavioral strategy they use.

This work highlights the danger of using behaviorally-biased strategies in competitive settings. We might expect that, while such strategies are not minimax optimal, some seemingly reasonable ones might do well enough, particularly against an uninformed opponent. However, our work shows that some behavioral biases can be exploited even by a player without prior knowledge of the game matrix entries who does not observe payoffs. 

\section{Setting} \label{setting}
We consider two-player, repeated games with the following properties: 

\begin{itemize}
    \item Symmetric
    \item Zero-sum
    \item Payoffs are limited to $\{1,0,-1\}$ (representing win, tie, and loss respectively)
    \item Each action beats at least one other action
    \item Each action loses to at least one other action 
\end{itemize}

\begin{definition}[Permissible Game] \label{def:permissible_game}
We refer to a game with the above properties as a permissible game for convenience.
\end{definition}

Note that a symmetric game with $n$ actions ($a_1$, ..., $a_n$) can be represented by an $n \times n$ payoff matrix $M$, where $M[i,j]$ is the payoff for the row player and $-M[i,j] = M[j,i]$ is the payoff of the column player when the former plays $a_i$ and the latter plays $a_j$. 

We consider how well we can do without knowing the payoff matrix $M$ or even observing payoffs when playing against a behaviorally-biased opponent who does know $M$. Our goal will be to win nearly every round.

\subsection{Models of behaviorally-biased opponents}
We assume all opponents know the payoff matrix and play a deterministic, behaviorally-biased strategy. The following are the particular biased adversaries we consider here: 

    \paragraph*{Myopic Best Responder} The opponent plays a best response to our last action. If there are multiple best responses to our last action, we assume the opponent breaks ties according to some fixed but unknown order over actions. On the first round, since there is no previous action, we assume the opponent plays the first action in that ordering. 
    
    This is our simplest opponent, who just assumes that tomorrow will be exactly like today.
    
    \paragraph*{Gambler's Fallacy} The opponent plays a best response to our ``most overdue'' (least-frequently-played) action. This opponent is motivated by the Gambler's Fallacy, in which one assumes an event which hasn't happened as often as expected is ``overdue'' and is more likely to occur (for example, when tossing a fair coin, the belief that Tails is more likely following a series of Heads). 
    
    If there are multiple best responses to our ``most overdue'' action, or if there are multiple ``most overdue'' actions the opponent might try to counter (for example, at the start of the game when all actions are equally ``overdue'' since they have all been played zero times), the opponent breaks ties using a fixed but unknown order over actions.
    
    \paragraph*{Win-Stay, Lose-Shift} The opponent continues to play the same action following a win, and switches to a new action following a loss. We consider two variants of this strategy: one in which the opponent behaves as if a tie is a win, continuing to play the same action after tying, and one in which the opponent behaves as if a tie is a loss, switching actions after tying. 

    We assume that when the opponent shifts actions, they shift to the action immediately following the current action in some fixed but unknown order over actions. In the first round, we assume the opponent plays the first action in that ordering.
    
    Win-stay lose-shift was initially proposed as an early algorithm for bandit problems by Robbins \cite{bams/1183517370}. There is also some empirical research indicating that people may play rock-paper-scissors according to an approximate version of this strategy: Wang et al. \cite{Wang_2014} observed that players are more likely to play the same action again after winning, and more likely to shift to the next action in the sequence rock-paper-scissors following a loss. More broadly, win-stay lose-shift has been proposed as a potential explanation for the evolution of cooperative behavior \cite{doiStrategyWinstay}. 

    This strategy can also be viewed as similar to the hot-hand fallacy, in which one believes an action which has just won is having a ``hot streak'' and is more likely to win in the future, and conversely believes an action which has just lost is having a ``cold'' streak and is more likely to lose in the future. 
        
    \paragraph*{Follow the Leader} The opponent plays the best action in retrospect, where the best action in retrospect is the action that would have achieved the highest net payoff against our history of play. If multiple actions are tied for the highest payoff, the opponent breaks ties using some fixed but unknown order over actions. In the first round, when there is no history to choose based on, the opponent plays the first action in that ordering. 
    
    It has been hypothesized that irrational strategic behavior may be explained by agents playing a misspecified or simplified version of a game rationally \cite{DEVETAG2008364}. The above strategy is reasonable for an opponent who erroneously models our play as stochastic. In online learning, this strategy is called Follow-the-Leader or FTL (see \cite{shalev2012online}). 
    
    We also consider a limited-history variant of this opponent, in which the the opponent only considers the last $r$ rounds, where $r > 0$.

    \paragraph*{Highest Average Payoff} The opponent plays the action that has achieved the highest average payoff over the times they have played it. We assume the opponent initializes the average payoff of each action as 0. If multiple actions are tied for the highest average payoff, the opponent breaks ties using some fixed but unknown order over actions.

    This is a reasonable strategy for an opponent who believes our play is stochastic, in which case its actions can be viewed as coins with various biases. It also corresponds to a well-known reinforcement learning algorithm, epsilon-greedy Q-learning, with $\epsilon=0$ (see \cite{sutton2018reinforcement}).

    \section{Preliminaries and Intuition}

While the strategy of playing action $i$ against action $j$ roughly as many times as we play action $j$ against action $i$ (for all $i$, $j$) employed by the Copycat algorithm in Feldman et al. \cite{feldman2010playing} is sufficient to roughly tie in a symmetric zero-sum game, we'd like to go further and win nearly every round. To do this, we will learn to predict the opponent's actions and learn best responses to those actions the opponent plays. It turns out that predicting the opponent's actions is the ``easy'' part, and the main challenge will be learning best responses. We discuss each of these at a high level below.

\paragraph*{Predicting the Opponent's Actions} \label{sec_general_prediction}
It is possible to learn to predict any of our opponents' actions using a simple halving algorithm (see \cite{littlestone1988learning}). Recall that in our setting, an opponent's choice of action is deterministic and induced by their known biased strategy, their tie-breaking mechanism, the unknown game matrix, and the history of play. Consider all possible combinations of the opponent's known biased strategy with a potential tie-breaking mechanism to be a family of deterministic strategies. Then, consider all possible pairs including a deterministic strategy from that family and a game matrix of the appropriate size. Predict the opponent's next action by taking the majority vote over the predictions corresponding to these pairs and the play history. Then, eliminate all pairs that did not correspond to a correct prediction. Each time a mistake is made, at least half of the remaining pairs must have predicted incorrectly, so at least half of the remaining pairs are eliminated. Given a finite number of possible tie-breaking mechanisms, this algorithm clearly results in a finite number of prediction errors. However, it is inefficient: even with our assumption of only 3 possible payoff values, there are roughly $3^{n^2}$ possible game matrices, and for each game matrix, there may be many tie-breaking mechanisms to consider: throughout this paper we generally assume the opponent breaks ties according to some fixed ordering over actions, in which case  there are $n!$. We will give more efficient prediction algorithms for each of the opponents we introduced above by taking further advantage of their specific strategies. 

\begin{algorithm}[H] 
 \caption{Generic Algorithm for Predicting the Opponent's Next Action \label{alg_prediction_by_halving}} 
    \hspace*{\algorithmicindent} \textbf{Input:}  A family of deterministic strategies $\mathbb{S}$ and the size of the game matrix $n$  \\
    \hspace*{\algorithmicindent} \textbf{Output:} The action we predict the opponent will play next \\
    \begin{enumerate}
        \item initialize $hist$, the play history, to the empty list [].
        \item $\mathbb{H} \gets$ all possible pairs $(S,G)$ including a strategy $S \in \mathbb{S}$ and a game matrix $G$ of size $n$.
        \item Predict the opponent's action as the majority vote among $S(G, hist) \quad \forall (S, G) \in \mathbb{H}$, breaking ties arbitrarily. 
        \item Play some action $a_{our}$ and observe the opponent's actual action, $a_{opponent}$.
        \item Remove from $\mathbb{H}$ all pairs which did not correspond to correct predictions ($\mathbb{H} = \mathbb{H} \setminus (S,G) : S(G, hist) \neq a_{opponent}$). 
        \item Append $(a_{our}, a_{opponent})$ to $hist$.
        \item Return to step 3. 
    \end{enumerate}
\end{algorithm}

\begin{theorem} \label{thm_halving_bounds} 
   Given an opponent playing according to a consistent deterministic strategy which breaks ties among actions according to fixed ordering over actions, Algorithm \ref{alg_prediction_by_halving} makes up to O($n^2$) prediction mistakes in $O(3^{n^2}n!R)$ time per round, where $R$ is the runtime in which the opponent's strategy selects an action given the game matrix and history. 

   In general, if the opponent plays according to a consistent deterministic strategy in $\mathbb{S}$, Algorithm \ref{alg_prediction_by_halving} makes up to O$(n^2 + log(|\mathbb{S}|))$ prediction mistakes in $O(3^{n^2}|\mathbb{S}|R)$ time per round, where $O(R)$ is the time in which any strategy in $\mathbb{S}$ returns an action given the game matrix and history. 
\end{theorem}

\begin{proof}
    At each round, we consider the predictions given by $|\mathbb{S}|$ possible strategies paired with up to $O(3^{n^2})$ possible game matrices. If any strategy in $\mathbb{S}$ chooses an action in $O(R)$ time, this takes $O(3^{n^2}R)$ time per round. 
    
    To predict, we begin with $3^{n^2}|\mathbb{S}|$ hypotheses (pairs of deterministic strategies and possible game matrices). We assume the opponent plays one of the deterministic strategies in $\mathbb{S}$, so their strategy paired with the correct game matrix of size $n$ will always predict correctly. We consider all possible game matrices of size $n$, so this pair must be in our initial $\mathbb{H}$. Since it never makes a mistake, it will never be eliminated. Therefore when the size of $\mathbb{H}$ is reduced to 1 we are guaranteed to always predict correctly as there is a perfect hypothesis in $\mathbb{H}$. Because we predict according to the majority vote among hypotheses in $\mathbb{H}$, each time we make a mistake, at least half of the pairs correspond to an incorrect prediction and are eliminated. We can halve the size of $\mathbb{H}$ at most $\log(|\mathbb{H}|) = log(3^{n^2}|\mathbb{S}|)$ times, so we can make at most that many prediction mistakes.
\end{proof}

\paragraph*{Learning Best Responses}
It may seem like being able to predict generally is sufficient to learn best responses, but this is not the case. For example, a general (exponential) algorithm which does not work is to find the lexicographically first game matrix and action ordering pair consistent with all the opponent’s actions so far and their behavioral strategy, and then play a best response to the action predicted by that pair according to the corresponding game matrix. However, even for the simplest opponent we consider, the Myopic Best Responder, and identical orderings over actions, there could be two game matrices $M$ and $M^*$ which are consistent with the opponent’s play, but playing best responses with respect to $M$ could result in losing most rounds (and tying in the remaining rounds) if the true game matrix is $M^*$ (see the example below). Instead, we will more actively exploit the opponent's particular strategy to elicit best responses.

\paragraph*{Example: Playing Optimally According to a Consistent Game Matrix and Action Ordering is Not Sufficient to Win} \label{predicting_insufficient}

   \begin{table}[H]
    \setlength{\extrarowheight}{2pt}
    \begin{tabular}{cc ? c|c|c?c|c|c?}
      & \multicolumn{1}{c}{} & \multicolumn{6}{c}{$M^*$} \\
      & \multicolumn{1}{c}{} & \multicolumn{1}{c}{$R$} & \multicolumn{1}{c}{$P$} & \multicolumn{1}{c}{$S$} & \multicolumn{1}{c}{$R'$}  & \multicolumn{1}{c}{$P'$} & \multicolumn{1}{c}{$S'$} \\\Cline{1.25pt}{3-8}
      \multirow{6}* & $R$ & $0$ & $-1$ & 1 & 1 & 0 & \textbf{0} \\\cline{3-8} 
      & $P$ & 1 & 0 & -1 & \textbf{1} & 1 & 0 \\\cline{3-8}
      & $S$ & -1 & 1 & 0 & 0 & \textbf{1} & 1 \\\Cline{1.25pt}{3-8}
      & $R'$ & -1 & \textbf{-1} & 0 & 0 & -1 & 1 \\\cline{3-8}
      & $P'$ & 0 & -1 & \textbf{-1} &  1 & 0 & -1 \\\cline{3-8}
      & $S'$ & \textbf{0} & 0 & -1 & -1 & 1 & 0 \\\Cline{1.25pt}{3-8}
    \end{tabular} \ \ \ \ \ \ \ \ \ 
    \begin{tabular}{cc ? c|c|c?c|c|c?}
      & \multicolumn{1}{c}{} & \multicolumn{6}{c}{$M$} \\
      & \multicolumn{1}{c}{} & \multicolumn{1}{c}{$R$} & \multicolumn{1}{c}{$P$} & \multicolumn{1}{c}{$S$} & \multicolumn{1}{c}{$R'$}  & \multicolumn{1}{c}{$P'$} & \multicolumn{1}{c}{$S'$} \\\Cline{1.25pt}{3-8}
      \multirow{6}* & $R$ & $0$ & $-1$ & 1 & 1 & 0 & \textbf{-1} \\\cline{3-8} 
      & $P$ & 1 & 0 & -1 & \textbf{-1} & 1 & 0 \\\cline{3-8}
      & $S$ & -1 & 1 & 0 & 0 & \textbf{-1} & 1 \\\Cline{1.25pt}{3-8}
      & $R'$ & -1 & \textbf{1} & 0 & 0 & -1 & 1 \\\cline{3-8}
      & $P'$ & 0 & -1 & \textbf{1} &  1 & 0 & -1 \\\cline{3-8}
      & $S'$ & \textbf{1} & 0 & -1 & -1 & 1 & 0 \\\Cline{1.25pt}{3-8}
    \end{tabular}
  \end{table} 

    Action Ordering $\Omega$: R, P, S, R', P', S'\\
    
Suppose we are playing a game $M^*$ against the Myopic Best Responder, whose ordering over actions is $\Omega$. Suppose that we predict the Myopic Best Responder's actions according to the correct action ordering $\Omega$ but a slightly different game matrix $M$, and play best responses according to $M$. Both games are like an expanded version of rock-paper-scissors, in which the interactions among R, P, S and among R',P', S' respectively are the same as in standard rock-paper-scissors, but the interactions between those two sets of actions differs between $M$ and $M^*$; the differing entries are bolded for convenience. Consider the following sequence of play, which is consistent with $M^*$, $M$, $\Omega$, and the strategy of the Myopic Best Responder, and optimal according to $M$:

\begin{table}[H]
\begin{tabular}{|l|c|c|c|c|c|c|c|}
\hline
Round & 1 & 2 & 3 & 4 & 5 & 6 & ...\\
\hline
MBR's Action & R & S & P & R & S & P & ...\\
Our Action & S' & P' & R' & S' &  P' & R' & ...\\
Anticipated Payoff (According to M) & 1 & 1 & 1 & 1 & 1 & 1 & ... \\
Actual Payoff (According to M*) & 0 & -1 & -1 & 0 & -1 & -1 & ...\\
\hline
\end{tabular}
\caption{Our Payoffs Playing Best Responses According to $M$}
\end{table}

The sequence shown in the first three rounds will repeat indefinitely as long as we continue to play the same best responses with respect to $M$, leading us to lose or tie in every round. 

Note that it is possible to construct very similar, larger games in which we lose to the Myopic Best Responder an arbitrarily large fraction of the time (depending on the number of actions in the game), since we only need to tie against the first action in the opponent's action ordering to avoid discovering any inconsistency. 

\section{Strategies for Beating Behaviorally Biased Opponents}
    \subsection{Myopic Best Responder} 
    
    \begin{algorithm} 
    \caption{Beat the Myopic Best Responder \label{alg_mbr}} 
    \begin{enumerate} 
        \item \textit{Learn best responses:} In round $r$ for $1 \leq r \leq n + 1$:
        \begin{itemize}
            \item If $r \leq n$, play action $a_r$. Otherwise (in round $n+1$), play an arbitrary action.
            \item If r > 1, record the opponent’s action as a best response to our action in round $r-1$.
        \end{itemize}
        \item In round $r = n+2, ...$: 
        \begin{itemize}
            \item Predict the opponent's action as the recorded best response to our action in round $r-1$.
            \item Play the recorded best response to the predicted action.
        \end{itemize}
    \end{enumerate}
    \end{algorithm}
    
    \begin{theorem}
    Playing Algorithm \ref{alg_mbr} against the Myopic Best Responder in a permissible game (Definition \ref{def:permissible_game}) results in winning every round after the first $n+1$ rounds. 
    \end{theorem}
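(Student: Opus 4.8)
The plan is to split the analysis according to the two phases of Algorithm~\ref{alg_mbr} and to exploit two facts: that in a permissible game every action is beaten by some action, and that the Myopic Best Responder (MBR) is a deterministic function of our previous action alone. First I would isolate the central game-theoretic observation: in a permissible game, for any action $a$ there is an action $b$ with $M[b,a]=1$, since each action loses to at least one other action. Hence any best response to $a$ (from the perspective of the player choosing it) achieves payoff exactly $+1$; that is, it \emph{strictly beats} $a$ rather than merely tying. In particular the MBR's best response to $a$ is always an action that beats $a$. I expect this step to carry the whole argument, since it is what upgrades ``tie'' to ``win,'' and it is precisely where permissibility (specifically, that each action loses to something) is essential.

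Next I would analyze the learning phase, rounds $1$ through $n+1$. In round $r-1$ we play $a_{r-1}$, so by the MBR's definition the opponent plays its best response to $a_{r-1}$ in round $r$, which the algorithm records. As $r-1$ ranges over $1,\dots,n$ we play every action exactly once, so by the end of round $n+1$ we have recorded the MBR's best response to each of the $n$ actions. Because the MBR's choice depends only on our last action (its strategy, its fixed tie-breaking order, and the matrix are all fixed), this recorded table is exactly the function the opponent computes, with no dependence on the surrounding history.

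Then I would handle the exploitation phase, rounds $r \ge n+2$, by a one-line induction on $r$. Whatever action $c$ we played in round $r-1$ lies in $\{a_1,\dots,a_n\}$, so our table contains the MBR's best response $d$ to $c$; by the previous paragraph this $d$ is exactly what the MBR plays in round $r$, so our prediction is correct. We then play the recorded best response $e$ to $d$, and by the central observation $e$ beats $d$, i.e.\ $M[e,d]=1$. Since the opponent indeed plays $d$, we win round $r$. The only fact the induction needs is $c \in \{a_1,\dots,a_n\}$, which holds trivially every round, so the argument closes immediately and yields a win in every round from $n+2$ onward.

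The remaining points require care rather than presenting a genuine obstacle, and are essentially bookkeeping: confirming that the action we play in the exploitation phase is itself among the $n$ actions for which a best response was recorded (automatic, since every action we ever play is one of $a_1,\dots,a_n$), and confirming that the MBR's response is genuinely history-independent, so that the table learned in phase one remains valid in phase two. Given the central winning-response observation, both are immediate, and the theorem follows.
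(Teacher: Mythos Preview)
Your proposal is correct and follows essentially the same approach as the paper's proof: record the MBR's (deterministic) best response to each action during rounds $1$ through $n+1$, then predict and counter from round $n+2$ onward. Your write-up is more explicit than the paper's in one useful respect: you spell out why a ``best response'' in a permissible game is necessarily a \emph{winning} response (payoff $+1$) rather than a tie, which the paper leaves implicit in the phrase ``valid best response.''
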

    \begin{proof}
    The Myopic Best Responder plays a best response to our previous action, so we record a correct best response to each action during the first $n+1$ rounds. The Myopic Best Responder always plays the same best response (the first one in its action ordering) following any given action, so we correctly predict the action it will play from round $n+2$ onward. Therefore we win every round from round $n+2$ onward, since we correctly predict the opponent's action and play a valid best response to it.
    \end{proof}

    \subsection{Gambler's Fallacy Opponent}

    \begin{algorithm}[H] 
    \caption{Beat the Gambler's Fallacy Opponent \label{alg_gambler}}
    \begin{enumerate}
        \item \textit{Learn a best response to $a_n$:} For the first $n$ rounds, play action $a_r$ during round $r$. Record the opponent's action in round $n$ as a best response to action $a_n$, call it $BR(a_n)$.
        \item \textit{Learn a best response to $BR(a_n)$:} During the next $n$ rounds ($n < r \leq 2n$), play each of the $n$ actions in any order as long as $BR(a_n)$ is last. Record the opponent's action in round $2n$ as a best response to $BR(a_n)$, call it $BR(BR(a_n))$. 
        \item \textit{Get the opponent to play $BR(a_n)$:} During the next $n-1$ rounds ($2n < r < 3n$), play each action except for $a_n$.
        \item For $r \geq 3n$, play $BR(BR(a_n))$.
    \end{enumerate}
    
    \end{algorithm}
    \begin{theorem}
        Playing Algorithm \ref{alg_gambler} against the Gambler's Fallacy opponent in a permissible game (Definition $\ref{def:permissible_game}$) results in winning every round from round $3n$ onward. 
    \end{theorem}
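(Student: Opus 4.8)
The plan is to track, round by round, which of our actions is ``most overdue'' (least frequently played), since this is exactly what determines the Gambler's Fallacy opponent's response, and then to verify that Algorithm \ref{alg_gambler} (i) records genuine best responses during its learning phases and (ii) forces the opponent into a fixed, beatable action from round $3n$ on. First I would establish three structural facts that follow purely from the permissible-game assumptions: since $a_n$ loses to some action, its best response $BR(a_n)$ achieves payoff $1$ and hence beats $a_n$ (in particular $BR(a_n) \neq a_n$); likewise $BR(BR(a_n))$ beats $BR(a_n)$; and crucially $BR(BR(a_n)) \neq a_n$, because $a_n$ loses to $BR(a_n)$ and so cannot be a (payoff-$1$) best response to it. This last inequality is the linchpin of the whole argument.

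Next I would walk through the learning phases, computing the play counts. After step 1 we have played $a_1,\dots,a_n$ once each; the key point is that entering round $n$ only $a_n$ has count $0$ while all others have count $1$, so $a_n$ is the \emph{unique} most overdue action and the opponent's round-$n$ play is a well-defined best response to $a_n$, correctly recorded as $BR(a_n)$. For step 2, since $BR(a_n)$ is played last (round $2n$), entering round $2n$ it alone has count $1$ and every other action has count $2$; thus $BR(a_n)$ is uniquely most overdue and the opponent's round-$2n$ play is correctly recorded as $BR(BR(a_n))$. I would note that the opponent's fixed tie-breaking order is irrelevant in both cases because the most overdue action is unique, and that whatever best response the opponent selects is still a valid one for us to exploit.

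Then I would show step 3 establishes the winning invariant. After step 2 all counts equal $2$; step 3 plays every action except $a_n$ once more over its $n-1$ rounds, so entering round $3n$ the action $a_n$ has count $2$ while all $n-1$ other actions have count $3$, making $a_n$ the unique most overdue action. The opponent therefore plays its fixed best response $BR(a_n)$ in round $3n$, and our play of $BR(BR(a_n))$ beats it, winning the round. Finally I would argue the invariant is self-sustaining: from round $3n$ onward we play only $BR(BR(a_n))$, and because $BR(BR(a_n)) \neq a_n$ this never increments $a_n$'s count, so $a_n$ stays at count $2$ while every other action stays at count $\geq 3$. Hence $a_n$ remains uniquely most overdue forever, the opponent plays $BR(a_n)$ every round, and we win every round from $3n$ on.

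The main obstacle, and the step I expect to require the most care, is the inequality $BR(BR(a_n)) \neq a_n$: without it, our repeated play could re-inflate $a_n$'s count, destroy its status as the unique minimum, and let the opponent's response drift away from the action we are prepared to beat. The counting bookkeeping across the three phases is routine but must be carried out precisely to confirm uniqueness of the most overdue action at rounds $n$, $2n$, and $3n$.
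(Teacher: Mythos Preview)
Your proposal is correct and follows essentially the same approach as the paper's proof: tracking play counts to show $a_n$ is uniquely most overdue at rounds $n$, $2n$, and $3n$, and using the key observation $BR(BR(a_n)) \neq a_n$ (since $a_n$ loses to $BR(a_n)$ and so cannot be a best response to it) to conclude the invariant persists forever. Your write-up is somewhat more explicit about the count bookkeeping and about why uniqueness of the most overdue action makes the opponent's tie-breaking on overdue actions irrelevant, but the structure and the linchpin inequality are identical to the paper's argument.
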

    \begin{proof} Recall that the Gambler's Fallacy opponent plays the best response to our ``most overdue'' (least-frequently-played) action. During round $n$, $a_n$ is the most overdue action since it is the only unplayed action, so the opponent's action in round $n$, $BR(a_n)$, is a valid best response to $a_n$. During round $2n$, $BR(a_n)$ is the most overdue as every other action has been played twice while it has been played once, so the opponent's action in that round, $BR(BR(a_n))$, is a valid best response to it. During round $3n$, $a_n$ is the most overdue since all other actions have been played three times while it has been played twice. Since the Gambler's Fallacy opponent always plays the same best response to a given action and it previously chose $BR(a_n)$ as its best response to $a_n$, it will play $BR(a_n)$. Since we play $BR(BR(a_n))$, which we showed above is a best response to $BR(a_n)$, we win round $3n$. Now, note that $a_n$ and $BR(BR(a_n))$ cannot be the same action: that would imply that $BR(a_n)$ both beats and is beaten by $a_n$. Therefore, $a_n$ remains the ``most overdue'' action, and the opponent will play $BR(a_n)$ again in the next round. We therefore win every round from round $3n$ onward by always playing $BR(BR(a_n))$, as the opponent continues to play $BR(a_n)$ as long as we do so. 
    \end{proof}
        
    \subsection{Win-Stay, Lose-Shift Opponent} \label{sec_win_stay_lose_shift}
    Recall that the Win-Stay Lose-Shift opponent plays the same action immediately following a win, and switches to the next action in its action ordering immediately following a loss. The Tie-Shift variant of this opponent treats a tie like a loss and shifts, and the Tie-Stay variant treats a tie like a win and stays.

     \subsubsection{Variant: Tie-Shift}

     \begin{algorithm}[H] 
         \caption{Beat the Tie-Shift Variant of Win-Stay, Lose-Shift Opponent \label{alg_tie_shift}}
         \begin{enumerate}
             \item \textit{Learn the opponent's action ordering:} Repeat the following n-1 times: 
             \begin{itemize} 
                \item Play each action in succession until the opponent switches actions. 
                \item Record the order the opponent plays actions as its action ordering.
            \end{itemize}
             \item \textit{Learn best responses:} For $1 \leq i \leq n$:
             \begin{itemize}
                 \item Play action $a_i$ until the opponent plays any action twice in a row, or until the opponent has played every action once.
                 \item Afterward, record the opponent's last action as a best response to $a_i$.
             \end{itemize}
             \textit{Note: in the final round of this step, we play $a_n$ and the opponent plays a best response to $a_n$, so the opponent will repeat their action in the next round.} 
             \item Play the recorded best response to the opponent's last action. 
             \item For all remaining rounds:
             \begin{itemize}
                 \item Predict the opponent's next action as the action following their last-played action in the recorded action order (looping around if at the end). 
                 \item Play the recorded best response to the predicted action. 
            \end{itemize}
         \end{enumerate}
     \end{algorithm}

     \begin{theorem}
        Playing Algorithm \ref{alg_tie_shift} against the Tie-Shift variant of the Win-Stay Lose-Shift opponent in a permissible game (Definition $\ref{def:permissible_game}$) wins all but at most $2n^2 - 2n + 1$ rounds.
     \end{theorem}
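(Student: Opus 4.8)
The plan is to split the analysis along the four steps of Algorithm~\ref{alg_tie_shift}: a learning portion (Steps~1--2), in which we reconstruct the opponent's fixed cyclic ordering and a beating action for each of our own actions, and an exploitation portion (Steps~3--4), in which we win every round. I would prove three things in order: (i) the learning steps recover the correct information; (ii) the number of non-winning rounds incurred during learning is at most $2n^2 - 2n + 1$; and (iii) every round from Step~3 onward is a win, so no further non-wins accrue. The heart of the argument is that the Tie-Shift opponent advances to the next action in its order exactly when it fails to win (i.e., whenever we tie or beat its current action) and stays only when it beats us; this deterministic, observable behavior is what both the learning and exploitation steps exploit.

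For correctness of Step~1, I would use that, in a permissible game (Definition~\ref{def:permissible_game}), every action of the opponent is beaten or tied by at least one of our actions --- in particular, playing the opponent's own current action forces a tie --- so cycling through our actions is guaranteed to force a shift, and reading off the successive actions the opponent shifts through recovers the full cyclic order after $n-1$ forced shifts. For Step~2, repeatedly playing a fixed action $a_i$ forces the opponent to shift past every action it does not win with until it reaches one that beats $a_i$, at which point it stays and we detect a repeat; since $a_i$ loses to at least one action, such a staying action exists. Here I would verify carefully that \emph{both} stopping conditions (observing a repeated action, or having seen all $n$ actions) leave the opponent's last action equal to a genuine best response to $a_i$, and that the opponent's state carried over from the previous sub-phase (where it has just won and will repeat, as the algorithm's own note highlights) does not corrupt this.

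To bound the non-winning rounds, I would count each learning phase separately, using that no action is unbeatable: the opponent's current action beats at most $n-2$ of our actions, so whenever we are forcing a shift we lose at most $n-2$ rounds before succeeding. In Step~1 there are $n-1$ shifts to force, and in Step~2 there are $n$ sub-phases, each lasting at most about $n$ rounds, with many of those rounds actually being wins while the opponent passes through actions we beat. Each phase therefore contributes $O(n^2)$ non-wins, and I would sum the precise worst cases (together with the single boundary round before exploitation begins) to reach exactly $2n^2 - 2n + 1$. For the exploitation phase I would then argue by a self-sustaining invariant: in Step~3 we play the recorded beating action against the opponent's current (stayed) action and win, so the opponent loses and shifts to the next action in its order; since we recorded a beating action for every action and we know the order, we predict that next action and beat it, the opponent loses and shifts again, and so on. Because we always beat the opponent, it always loses, always shifts to the action we predict, and the predict-then-beat loop continues forever, winning every remaining round.

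I expect the main obstacle to be the bookkeeping in Step~2 together with the exact round count, rather than the exploitation argument, which is comparatively routine once the invariant above is in place. Verifying that the recorded action is a valid best response under both stopping conditions, and tracking how the opponent's end-of-sub-phase state feeds into the next sub-phase, requires care to avoid off-by-one errors; and threading the worst-case losses across Steps~1 and~2 so that they sum to precisely $2n^2 - 2n + 1$, rather than to a looser $O(n^2)$ bound, is where the delicate counting lies.
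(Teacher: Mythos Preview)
Your proposal is correct and follows essentially the same approach as the paper: the paper likewise splits into learning (Steps~1--2) and exploitation (Steps~3--4), establishes correctness of the recorded ordering and best responses via the same case analysis on the two stopping conditions in Step~2, proves the self-sustaining win-then-shift invariant for Steps~3--4, and obtains the exact bound as $(n-1)^2$ non-wins in Step~1 (each of the $n-1$ forced shifts taking at most $n-1$ rounds) plus $n^2$ non-wins in Step~2 (each of the $n$ sub-phases taking at most $n$ rounds). The one small discrepancy is that there is no extra ``boundary round'' to account for --- Step~3 is already a guaranteed win --- so the precise count is simply $(n-1)^2 + n^2 = 2n^2 - 2n + 1$.
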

     \begin{proof}
        In the first phase, we record the correct action ordering: the opponent starts by playing the first action in their action ordering and always shifts to the next action in the ordering, so by observing $n-1$ shifts we observe all $n$ actions in the correct order. Because we play each action in succession against the opponent's current action, they are guaranteed to shift after at most $n-1$ rounds, since their action must tie with itself and lose to at least one other action.
     
        Now, we claim that we learn a best response to each action in phase 2. Proof: For each action $a_i$, we observe an action the opponent plays twice in response to it or we observe that the opponent plays each action once in response to it. If the opponent plays the same action twice, the opponent's action beats $a_i$ because the opponent only repeats the same action immediately following a win. If the opponent has played every action once against our action, $a_i$ must not have been beaten by any of the first $n-1$ actions played by the opponent, since the opponent shifted after playing each of those against $a_i$. Since each action must be beaten by at least one other action, the opponent's last action must beat $a_i$. Therefore, we correctly record an action which beats each action $a_i$. In our setting, where there is only one payoff value corresponding to a win, any action which beats $a_i$ is a best response to $a_i$.

        In the final round of phase 2, we played $a_n$ and the opponent played a best response to $a_n$. Therefore the opponent will repeat their action again in the next round, since they stay following a win, and we win that round by playing the recorded best response to their repeated action in phase 3.

        At the start of phase 4, we correctly predict the opponent's next action: we know we won the last round, so we know that the opponent will shift to the next action in its action order (which we have recorded correctly, as shown above). We showed above that we correctly recorded a best response to each action, so we win by playing the recorded best response to the predicted action. At the start of the next round, the same conditions hold (and will hold after each round), so we will win all subsequent rounds.
     
        Phase 1 (the process of learning the action ordering) requires $n-1$ rounds, each of which could incur up to $n-1$ losses or ties (each action loses to at least one action), for $n^2 - 2n + 1$ losses or ties overall. Phase 2 (the process of learning best responses) requires $n$ rounds, and each can incur up to $n$ losses or ties (note that there can only be up to 2 losses per round however, since the opponent would repeat its action in that case and the round would end), for $n^2$ losses or ties overall. We always win during the remaining phases, so the total number of losses or ties is bounded by $2n^2 - 2n + 1$.
     \end{proof}

    \subsubsection{Variant: Tie-Stay} 
    The main difference in beating the Tie-Stay variant compared to the Tie-Shift variant is that we can find a best response to each action by simply playing each action in succession until the opponent switches, since they only switch following a loss. For the algorithm, theorem and proof see $\ref{sec_tie_stay}$ in the Appendix.

    \subsection{Follow-the-Leader Opponent} \label{sec_ftl}

    Recall that the Follow-the-Leader opponent plays the best action in retrospect, defined as the action that would have achieved the highest payoff against our entire history of play. For this opponent, our strategy will be to learn a best response to each action, and then use the well-known ellipsoid algorithm to predict the opponent's actions while playing best responses to the predicted actions. 

    \paragraph*{Using Ellipsoid for Prediction} We use ellipsoid to find a point $M'$ in $n^2$-dimensional space roughly corresponding to the game payoff matrix $M$ (each of the coordinates of $M'$ approximates a payoff value). We have variables $m'_{ij}$ for each coordinate of $M'$ representing the payoff of $a_i$ against $a_j$, and coefficients $c_j$ denoting the number of times we have played $a_j$ (with $i,j \in [n]$). We can then use $M'$ to predict the opponent's best action in retrospect: $a_i$ where $i$ is $\arg\max_{i} \sum_{j} c_j m'_{ij}$. 

    The ellipsoid algorithm requires implementing a separation oracle that returns a violated constraint each time the algorithm makes a mistake. In this case, the oracle is inherent to the opponent's play: whenever a prediction error is made, the net score of the action the opponent played against our entire history must be greater than or equal to the net score of the action we predicted. However, to avoid separately handling ties and the opponent's unknown action ordering, we need these inequalities to be strict. Without loss of generality, we assume there are tiny values added to each true payoff value such that there is a slightly higher payoff for tied actions earlier in the opponent's action ordering. These values should be small relative to any actual payoff differences and inversely proportional to the number of rounds so that they cannot accumulate over time to outweigh any actual payoff differences. Given this assumption, whenever the opponent actually played $a_{i^*}$ and we predicted $a_{i'}$, we observe that $\sum_{j \in [n]} c_j m'_{i^* j} > \sum_{j \in [n]} c_j m'_{i'j}$.

    \begin{algorithm}[H] 
         \caption{Beat the Follow-the-Leader Opponent \label{alg_ftl}}
         \begin{enumerate}
             \item \textit{Learn Best Responses:} For $1 \leq i \leq n$: 
             \begin{itemize}
                 \item If $i \neq n$, play action $a_i$ $3^{i-1}$ times. Otherwise, play $a_i$ $3^{i-1} + 1$ times.
                 \item Record the opponent's action in round immediately after playing $a_i$ for the $3^{i-1}$th time as a best response to $a_i$. 
             \end{itemize}
             \item In all remaining rounds:
             \begin{itemize}
                 \item Predict the opponent’s next action using the estimate of the game matrix produced by
the ellipsoid algorithm, breaking ties arbitrarily.
                 \item Play the recorded best responses to the predicted action.
             \end{itemize}
         \end{enumerate}
     \end{algorithm}

    \begin{theorem}
        Playing Algorithm \ref{alg_ftl} against the Follow-the-Leader opponent in a permissible game (Definition $\ref{def:permissible_game}$) wins all but $O(3^n + n^4 \log(nr))$ rounds over $r$ rounds of play.
    \end{theorem}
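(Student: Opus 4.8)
The plan is to bound the non-winning rounds separately in the two phases of Algorithm~\ref{alg_ftl}: phase~1 contributes the $O(3^n)$ term and phase~2, through an ellipsoid mistake bound, contributes the $O(n^4\log(nr))$ term.

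For phase~1 I would first check that the geometric schedule makes each recorded response a genuine best response. Right after we have played $a_i$ for the $3^{i-1}$-th time, our play counts are $c_k=3^{k-1}$ for $k\le i$ and $c_k=0$ otherwise, so the Follow-the-Leader opponent plays the action $b$ maximizing its net payoff $\sum_j c_j m_{bj}$ against our history. The key inequality is $\sum_{j\ne i}c_j=\frac{3^{i-1}-1}{2}<3^{i-1}=c_i$: an action beating $a_i$ (so $m_{bi}=1$) earns at least $c_i-\sum_{j\ne i}c_j=\frac{3^{i-1}+1}{2}>0$, whereas an action not beating $a_i$ (so $m_{bi}\le 0$) earns at most $\sum_{j\ne i}c_j=\frac{3^{i-1}-1}{2}$. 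Because the game is permissible some action beats $a_i$, so the opponent's recorded response beats $a_i$ and hence is a best response (a win is the unique maximal payoff); the case $i=n$ is identical with $c_k=3^{k-1}$ for every $k$. Summing, phase~1 runs for $\sum_{i=1}^{n-1}3^{i-1}+(3^{n-1}+1)=\frac{3^n+1}{2}=O(3^n)$ rounds, each charged as a possible loss or tie.

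For phase~2 I would reduce ``winning'' to ``predicting.'' Having recorded a beating action for every action, we win on every round where the ellipsoid's prediction is correct, so each non-winning round in phase~2 is a prediction mistake and it suffices to bound these. Run the ellipsoid over the $n^2$ payoff variables starting from a ball of radius $R=O(n)$ (payoffs lie in $\{-1,0,1\}$ up to tiny perturbations). When we predict $a_{i'}$ but observe $a_{i^*}$ we obtain the linear constraint $\sum_j c_j m_{i^*j}>\sum_j c_j m_{i'j}$, which the perturbed true matrix $M$ satisfies strictly (this is exactly why the opponent chose $a_{i^*}$) while the current center $M'$ violates or meets with equality (this is exactly why $M'$ predicted $a_{i'}$). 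The constraint is therefore a central cut separating $M'$ from $M$; the retained half always contains $M$, and each mistake shrinks the ellipsoid volume by the standard factor $e^{-1/(2(n^2+1))}$.

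The main obstacle is the precision estimate converting this volume decay into the stated bound. I would show the feasible region contains a ball of radius $\epsilon=1/\mathrm{poly}(n,r)$ about $M$: changing every entry by $\epsilon$ moves any single constraint value by at most $\epsilon\cdot 2\sum_j c_j\le 2\epsilon r$, while the perturbations---``small relative to any actual payoff difference and inversely proportional to the number of rounds''---force every encountered constraint to hold with a margin bounded below by an inverse polynomial in $r$; taking $\epsilon$ below that margin over $2r$ keeps the whole ball feasible. Since no cut can push the ellipsoid volume below $\mathrm{Vol}(B(0,\epsilon))$, the number of mistakes is at most $2n^2(n^2+1)\log(R/\epsilon)=O(n^4\log(R/\epsilon))=O(n^4\log(nr))$, using $R/\epsilon=\mathrm{poly}(n,r)$. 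Combining phases yields $O(3^n)+O(n^4\log(nr))=O(3^n+n^4\log(nr))$ non-winning rounds. The points I expect to require the most care are (i) confirming the separation oracle always yields a valid central cut containing $M$ even though ties are broken by the unknown action ordering, and (ii) pinning down the inverse-polynomial margin so that $\log(R/\epsilon)=O(\log(nr))$, i.e.\ that the perturbations never accumulate to overturn a genuine payoff difference.
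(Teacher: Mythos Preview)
Your proposal is correct and follows essentially the same two-phase decomposition as the paper: your phase~1 inequality $c_i-\sum_{j\ne i}c_j=\tfrac{3^{i-1}+1}{2}>\tfrac{3^{i-1}-1}{2}=\sum_{j\ne i}c_j$ is exactly the quantitative form of the paper's ``play $a_i$ twice the previous rounds plus one'' argument, and your phase~2 ellipsoid analysis (central cuts, volume decay, $R_{\text{inner}}=1/\mathrm{poly}(n,r)$ from the tie-breaking perturbations) matches the paper's appendix treatment. The two points you flag for care---validity of the separation oracle under unknown tie-breaking and the inverse-polynomial margin---are precisely what the paper handles by positing perturbations of size $\tfrac{1}{ir}$ on the $i$th action, so your outline is on target.
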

    
    \begin{proof} 
    First, we show we correctly record a best response to each action in phase 1 of the algorithm. The high level idea is that we learn a best response to each action $a_i$ by playing each action enough times in a row that the opponent must play an action which beats it (which, in our setting with only a single payoff value corresponding to a win, is a best response). Note that such an action must exist since in a permissible game, each action is beaten by at least one other action. Let $BR(a_i)$ denote the best response to $a_i$ that the opponent will eventually play. For any action $b$ which does $\textit{not}$ beat $a_i$, the highest possible payoff over our entire history would be achieved if $b$ beat every action in our history besides $a_i$, and tied against $a_i$. Meanwhile, the lowest possible score for $BR(a_i)$ would be if $BR(a_i)$ lost to every action in our history besides $a_i$ (while of course winning against $a_i$). It is also possible that $b$ precedes $BR(a_i)$ in the opponent's action ordering, so we need to play $a_i$ enough times that the lowest possible score of $BR(a_i)$ is strictly greater than the highest possible score of any such action $b$. To achieve this, we need to play $a_i$ twice as many times as the total number of previous rounds (to equalize the worst-case scores of $-1$ that $BR(a_i)$ got in all previous rounds with $b$'s $+1$s) plus 1 (to break the tie in case $b$ precedes $BR(a_i)$ in the opponent's action ordering). When doing this for each action in succession, we should therefore play the first action 1 time, second 3 times, followed by 9, 27, etc., which is $3^{i-1}$ times for the $i$th action as indicated in the algorithm. Note that we play $a_n$ one extra time to allow observation of $BR(a_n)$ in the following round, before moving on to phase 2. 

    In first phase, there are $\sum_i 3^{i-1} + 1 = \frac{3^n -1}{2} +1$ rounds, so it incurs at most that many losses or ties. Then, since we correctly recorded best responses, we only tie or lose during phase 2 if the ellipsoid algorithm makes a prediction mistake. The ellipsoid algorithm makes at most $O(n^4 \log(nr))$ mistakes, where r is the total number of rounds played (see $\ref{sec_ellipsoid_mistake_bounds}$ in the Appendix for proof), so the total number of losses or ties is bounded by $O(3^n + n^4 \log(nr))$. \end{proof}

    Note that while the bound on the number of losses and ties we incur is exponential, the runtime can be considered efficient since choosing which action to play in each round is efficient; this is an improvement over the simple general prediction algorithm we considered in section \ref{sec_general_prediction}, which requires considering an exponential number of game matrices to choose which action to play in a single round. We also consider a limited-history variant of the Follow-the-Leader opponent below, against which we can achieve a polynomial bound on losses and ties.

    \subsubsection{Variant: Limited History}
    Recall that the Limited History variant of the Follow-the-Leader opponent plays the action that would have achieved the highest net payoff against the last $r$ rounds of our play. We can use an almost identical algorithm to beat this variant, except that we can force the opponent to play a best response to any action $a_i$ by playing $a_i$ just $r$ times. This allows us to beat this variant with a polynomial bound on ties and losses. For the full algorithm, theorem, and proof, see the \ref{sec_ftl_limited_history} in the Appendix. 
    
    \subsection{Highest Average Payoff Opponent}
    The Highest Average Payoff opponent plays the action that has achieved the highest average payoff over the times they have played it. We discuss this opponent in $\ref{sec_highest_avg_payoff}$.

\section{Generalizing}
\subsection{Other Behaviorally-Biased Strategies} \label{sec_generalizing_other_opponents}
A natural question to ask is, what kinds of behaviorally-biased strategies can we exploit to win nearly every round of a permissible game (Definition \ref{def:permissible_game})? Clearly, if we can predict the opponent's actions and learn best responses to those actions the opponent plays, we can achieve this goal.

\paragraph*{Predicting the Opponent’s Actions} \label{sec_general_pred_multiple_strategies}
Given the opponent's known biased strategy and a finite set of $t$ possible tie-breaking mechanisms, which together define the family $\mathbb{S}$ of possible deterministic strategies the opponent could be using, we can again predict the opponent's actions using the simple halving algorithm described in Section \ref{sec_general_prediction} (Algorithm \ref{alg_prediction_by_halving}). As $|\mathbb{S}| = t$, from Theorem $\ref{thm_halving_bounds}$, this will incur at most $O(n^2 + log(t))$ prediction mistakes and take time $(O(3^{n^2}tR))$ per round  (where $R$ is the time in which the opponent's strategy chooses an action given the game matrix and history). 

\paragraph*{Learning Best Responses}
The more challenging question is which deterministic strategies we can exploit to learn best responses for any action the opponent might play. Note that we cannot do this for every deterministic strategy; consider a very simple opponent who always plays the same action $a$. In this case, the opponent's play does not reveal any information about a best response to $a$. 

\begin{observation}
    It is not always possible to exploit a known deterministic strategy to guarantee winning any rounds.
\end{observation}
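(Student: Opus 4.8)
The plan is to prove this impossibility by formalizing the constant opponent suggested in the preceding paragraph, combining an indistinguishability argument with a simple averaging (counting) argument. Concretely, let $\sigma$ be the known deterministic strategy that plays $a_1$ in every round, regardless of history. Since our player never observes payoffs and $\sigma$ ignores the game matrix entirely, the entire observable transcript under $\sigma$ --- namely the stream of opponent actions $a_1, a_1, a_1, \ldots$ together with our own past actions --- is identical across every permissible game in which the opponent uses $\sigma$. Consequently, any algorithm (deterministic or randomized) must produce exactly the same sequence of plays $b_1, b_2, \ldots$ (with the same distribution) in all such games: it has no information with which to distinguish them.

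Next I would build a family of mutually indistinguishable permissible games. For each action $c \neq a_1$, fix a cyclic ordering of the $n$ actions in which $c$ is the immediate predecessor of $a_1$, and define $M_c$ by declaring that each action beats its successor in the cycle (payoff $+1$), loses to its predecessor (payoff $-1$), and ties every other action (payoff $0$). Each $M_c$ is symmetric, zero-sum, has payoffs in $\{1,0,-1\}$, and makes every action beat exactly one other and lose to exactly one other, so it is permissible; moreover $c$ is the unique action that beats $a_1$ in $M_c$. The $n-1$ games $\{M_c\}_{c \neq a_1}$ are all consistent with the opponent playing $\sigma$, hence all indistinguishable to us.

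With the family in hand, I would apply an averaging argument over the adversary's choice of game. Because our play $b_1, \ldots, b_T$ is fixed across the family, in game $M_c$ we win round $t$ exactly when $b_t = c$ (the unique $a_1$-beating action), so our total wins in $M_c$ equal $|\{t \le T : b_t = c\}|$. Summing over $c$ gives $\sum_{c \neq a_1} |\{t : b_t = c\}| = |\{t : b_t \neq a_1\}| \le T$, so the minimum number of wins over the $n-1$ games is at most $T/(n-1)$. Thus an adversary choosing the worst game in the family holds us to at most a $1/(n-1)$ fraction of wins --- bounded away from ``nearly every round'' and vanishing as $n$ grows --- and in particular forces zero wins over any horizon $T \le n-2$. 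The same computation carried out with expectations handles randomized algorithms.

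The main obstacle is really the indistinguishability step rather than the counting: I must argue carefully that, in the payoff-blind setting, a constant opponent leaks no information separating the $M_c$'s, so the player's behavior cannot adapt to which action beats $a_1$. Once that is pinned down, permissibility of the explicit cyclic games and the averaging bound are routine.
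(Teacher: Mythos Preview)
Your proposal is correct and rests on the same example the paper uses---an opponent who plays a fixed action $a$ regardless of history---but you develop it far more rigorously than the paper does. The paper's justification is a single informal sentence (``the opponent's play does not reveal any information about a best response to $a$''); you turn that intuition into an actual argument by (i) making the indistinguishability of all permissible games under a payoff-blind player explicit, (ii) constructing a concrete family of cyclic games $\{M_c\}_{c\neq a_1}$ in which the unique $a_1$-beater varies, and (iii) extracting a quantitative $T/(n-1)$ worst-case bound via averaging. What your version buys is a precise statement (at most a $1/(n-1)$ fraction of wins, and zero wins for horizons $T\le n-2$) rather than a bare impossibility claim; the paper's version buys brevity. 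One small remark: playing each of the $n-1$ non-$a_1$ actions once does guarantee at least one win, so the observation should be read as ``cannot guarantee winning nearly every round'' (or ``cannot guarantee any \emph{specific} round is a win'') rather than ``cannot guarantee a single win ever occurs''---your $T/(n-1)$ bound supports exactly that reading.
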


One sufficient condition which many natural behavioral strategies may have is that after we play an action ``enough'' times, the opponent will eventually play a best response to it. ``Enough'' needs to be some number we can determine; for example, a constant $c$, or some multiple $b$ of the number of previous rounds $r$, beginning with $b$ when there are no previous rounds. Then we can simply play each action in succession ``enough'' times, observing the opponent's action in the following round, to learn a best response to each action. In this case, the bound on the number of losses or ties during the process of learning best responses would be $cn + 1$ or $b(b+1)^{n-1} + 1$, respectively (assuming we learn best responses at the start of the game). 

\begin{observation}
    If the opponent plays a best response to an action if it is played $c$ times in a row, where $c$ is some constant, we incur at most $cn + 1$ losses or ties during the process of learning best responses. If the opponent plays a best response to an action if it is played a multiple $b$ of the number of previous rounds $r$, we incur up to $b(b+1)^{n-1} + 1$ losses or ties while learning best responses. 
\end{observation}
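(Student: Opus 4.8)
The plan is to carry out a direct counting argument over the learning-best-responses procedure described just above: play each action in succession ``enough'' times, reading off a best response to $a_i$ from the opponent's move in the round immediately following the last consecutive play of $a_i$. The key bookkeeping observation is that this revealing round can be shared with the first play of $a_{i+1}$: when we switch from $a_i$ (just played ``enough'' times in a row) to $a_{i+1}$, the opponent's action in that round is still its best response $BR(a_i)$ to the just-completed streak (its move there is chosen from the history through the previous round, which shows $a_i$ played enough times in a row), so we observe $BR(a_i)$ ``for free'' while already beginning the phase for $a_{i+1}$. Consequently only a single extra round is needed at the very end, to observe $BR(a_n)$ after the final streak. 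If $k_i$ denotes the number of consecutive times we play $a_i$, the entire learning process occupies $\sum_{i=1}^{n} k_i + 1$ rounds, and in the worst case each of these may be a loss or a tie, so it suffices to bound this total.

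For the constant case, the hypothesis is that playing an action $c$ times in a row forces the opponent to play a best response to it, so $k_i = c$ for every $i$. The total number of learning rounds is then $\sum_{i=1}^{n} c + 1 = cn + 1$, giving the claimed bound.

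For the multiplicative case, let $T_i = \sum_{j=1}^{i} k_j$ be the total number of rounds used after the phase for $a_i$. Since ``enough'' is $b$ times the number of previous rounds (taken to be $b$ when there are none), we have $k_1 = b$ and $k_i = b\,T_{i-1}$ for $i \ge 2$, where $T_{i-1}$ is exactly the number of rounds preceding the phase for $a_i$. This yields the recursion $T_i = T_{i-1} + k_i = (1+b)\,T_{i-1}$ with $T_1 = b$, whose solution is $T_n = b\,(b+1)^{n-1}$. Adding the single final observation round gives the bound $b(b+1)^{n-1} + 1$.

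I expect the main obstacle to be getting the round-accounting exactly right rather than any hard calculation: specifically, justifying that the round which reveals $BR(a_i)$ genuinely coincides with the first play of $a_{i+1}$ (so that the $+1$ is incurred only once, for $a_n$), and, in the multiplicative case, checking that ``the number of previous rounds $r$'' in the hypothesis is correctly identified with $T_{i-1}$ at the start of each phase so that playing $a_i$ exactly $b\,T_{i-1}$ times in a row does trigger the best response. Once the recursion $T_i = (1+b)\,T_{i-1}$ is pinned down, solving it and adding the final round is routine.
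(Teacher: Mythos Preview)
Your proposal is correct and follows exactly the approach the paper intends; indeed, the paper treats this as an ``observation'' and only sketches the reasoning in the preceding paragraph, so your argument is actually more detailed than what appears there. Your bookkeeping (sharing the revealing round with the first play of $a_{i+1}$, incurring a single extra round only for $a_n$) and the recursion $T_i=(1+b)T_{i-1}$ with $T_1=b$ are precisely what is needed to arrive at $cn+1$ and $b(b+1)^{n-1}+1$.
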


Note that not every reasonable behavioral strategy has this property; several of the biased opponents we discussed earlier do not: the Gambler's Fallacy opponent, the Tie-Stay variant of Win-Stay Lose-Shift opponent, and the Highest Average Payoff opponent. 

\subsection{Exploiting an Unknown Strategy from a Known Set of Strategies}

Another natural extension is to consider the scenario where we know the opponent uses some strategy from a known set of biased strategies $\mathbb{B}$, but we don't know which one. Again, we assume strategies in $\mathbb{B}$ are deterministic when parameterized with one of $t$ possible deterministic tie-breaking mechanisms.

\paragraph*{Predicting the Opponent’s Actions}
We can once again predict generally using the same halving algorithm from earlier (Algorithm $\ref{alg_prediction_by_halving}$). We only need to make a trivial modification to the input family $\mathbb{S}$ of deterministic strategies: instead of defining $\mathbb{S}$ to be the set of all possible pairs of a single known biased strategy with its possible tie-breaking mechanisms, we can define it to be the set of all possible pairs among $\textit{all}$ possible biased strategies in $\mathbb{B}$ and all $t$ possible tie-breaking mechanisms. Then since $|\mathbb{S}| = |\mathbb{B}|t$, from Theorem $\ref{thm_halving_bounds}$, Algorithm $\ref{alg_prediction_by_halving}$ makes up to $O(n^2 + \log(|\mathbb{B}|) + \log(t))$ prediction mistakes in $O(3^{n^2}|\mathbb{B}|tR)$ time per round (where $O(R)$ is the time in which any strategy in $\mathbb{B}$ chooses an action given the game matrix and history).

\paragraph*{Learning Best Responses}
The challenge, again, is to learn best responses. Generally, any time we can use the same algorithm to learn best responses given any strategy in $\mathbb{B}$, this extension is trivial. For example if all strategies in $\mathbb{B}$ have the natural property discussed above (in $\ref{sec_generalizing_other_opponents}$) that after we play an action ``enough'' times, the opponent will eventually play a best response to it, we can use the same algorithm from $\ref{sec_generalizing_other_opponents}$ to learn best responses (using the maximum value required by any strategy as ``enough''). 

\begin{observation}
    If all strategies in $\mathbb{B}$ have property that the opponent plays a best response to an action if it is played ``enough'' times in a row, and we know the maximum upper bound on ``enough'', we can learn best responses as described in $\ref{sec_generalizing_other_opponents}$. 
\end{observation}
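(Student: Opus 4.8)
The plan is to reduce this case to the single-strategy setting of Section~\ref{sec_generalizing_other_opponents}. Since we are given a known upper bound on the value of ``enough'' for every strategy in $\mathbb{B}$, we can take the maximum such bound and treat it as a single universal threshold that works no matter which strategy the opponent is actually using. In effect, overestimating ``enough'' is harmless, so one universal procedure simultaneously handles all strategies in $\mathbb{B}$.

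Concretely, first I would let $e_S$ denote the value of ``enough'' associated with each strategy $S \in \mathbb{B}$ and set $E = \max_{S \in \mathbb{B}} e_S$, which we are assuming is known. I would then run exactly the best-response-learning procedure from Section~\ref{sec_generalizing_other_opponents}: play each action $a_i$ for $E$ consecutive rounds and record the opponent's action in the following round as $BR(a_i)$. The correctness argument is that if the opponent is in fact playing some strategy $S^\star \in \mathbb{B}$, then $E \geq e_{S^\star}$, so after $E$ consecutive plays of $a_i$ the action $a_i$ has been played at least ``enough'' times for $S^\star$; hence the opponent's next action is a best response to $a_i$, and since the only payoff corresponding to a win is the same for every action, any such action is a valid best response to record. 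Repeating this over all $n$ actions yields a correct best response to each.

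The main obstacle, and the only point requiring care, is the \emph{monotonicity} of the ``enough'' property: we must ensure that playing an action \emph{more} than $e_{S^\star}$ times (namely $E$ times) still elicits a best response, rather than the property only being guaranteed at exactly the threshold. I would handle this by reading the defining property as a threshold condition --- that playing an action \emph{at least} ``enough'' times in a row forces the opponent's next action to be a best response --- which is the natural reading and is satisfied by the constant-$c$ and multiple-of-$r$ families already discussed in Section~\ref{sec_generalizing_other_opponents}. Under this reading the argument above goes through verbatim.

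Finally, I would transfer the loss/tie bounds from the earlier observation, substituting the universal threshold $E$: in the constant form the process costs at most $En + 1$ losses or ties, and in the multiplicative form with maximum multiplier $B = \max_{S \in \mathbb{B}} b_S$ it costs at most $B(B+1)^{n-1} + 1$, exactly as in Section~\ref{sec_generalizing_other_opponents} but with each per-strategy parameter replaced by its maximum over $\mathbb{B}$.
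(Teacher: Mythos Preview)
Your proposal is correct and follows exactly the same idea as the paper: the paper's entire justification for this observation is the single parenthetical remark ``(using the maximum value required by any strategy as `enough')'' in the paragraph preceding the observation. Your write-up is considerably more detailed than the paper's own treatment---in particular your explicit discussion of the monotonicity reading and the transferred loss/tie bounds---but the core argument is identical.
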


If it is not possible to use the same algorithm to learn best responses given any strategy in $\mathbb{B}$, it might instead be possible to first distinguish between strategies which need different best-response algorithms, and then learn best responses. As a very simple example, if we know the opponent uses one of the Tie-Stay variant or the Tie-Shift variant of the Win-Stay, Lose-Shift strategy, we cannot use the strategy mentioned above because the Tie-Stay variant does not have the property in question. However, we can distinguish between the two strategies by checking whether the opponent ``stays'' after we play the same action as it (since a permissible game is symmetric, this must be a tie. We can check this within $n$ rounds by playing the same action $a$ until either the opponent switches to playing $a$, or stays on some other action $b$. If the opponent stays on $b$, we can then play $b$ to check their tie behavior). Then, knowing our opponent's strategy, we can use the corresponding algorithm from Section \ref{sec_win_stay_lose_shift} to learn best responses (and in this case, to predict efficiently as well). 

Note, however, that it is not always possible to distinguish between deterministic strategies. As a simple example, consider playing a 3-action game against either the Myopic Best Responder, who plays a best response to our last action, or another opponent, call them Myopic Worst Responder, who plays a worst response to our last action. No matter how we play, the Myopic Best Responder with respect to rock-paper-scissors behaves identically to the Myopic Worst Responder with respect to a reverse version of rock-paper-scissors where the wins and losses are switched. Since both games are possible, we cannot distinguish whether the opponent is playing Myopic Best Response or Myopic Worst Response. Moreover, best responses with respect to one scenario are worst responses with respect to the other. Either of these opponents, if known, would be easy to exploit. However, in this case we cannot guarantee an average payoff greater than zero: if we play a best responses with respect to either opponent in a given round, it is a loss with respect to the other so our expected payoff for that round is 0. 

\begin{observation}
    There exist collections of pairs of game matrices and strategies such that it is not possible to guarantee a positive average payoff, even if we could exploit any of the individual strategies if known to win nearly every round. 
\end{observation}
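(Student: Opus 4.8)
The plan is to exhibit an explicit two-element collection witnessing the claim, following the sketch in the paragraph above. Take the standard rock--paper--scissors matrix $M_1$ paired with the Myopic Best Responder, and the ``reversed'' matrix $M_2 := -M_1$ (in which every win becomes a loss and vice versa) paired with the Myopic Worst Responder, i.e.\ an opponent who plays a worst response to our last action. Note first that $M_2$ is again a permissible symmetric zero-sum game: since $M_1[i,j]=-M_1[j,i]$ we get $M_2[i,j]=-M_1[i,j]=-M_2[j,i]$, and reversing wins and losses preserves the property that each action beats and is beaten by another. I would fix the two opponents to use the same action ordering and the same first-round convention.

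The crux is an indistinguishability lemma: for every action $a$, the unique best response to $a$ under $M_1$ coincides with the unique worst response to $a$ under $M_2$. This is immediate from $M_2=-M_1$, since the action $b$ maximizing $M_1[b,a]$ (the one with $M_1[b,a]=1$) is exactly the action minimizing $M_2[b,a]=-M_1[b,a]$; in rock--paper--scissors both responses are unique, so no tie-breaking subtleties arise. Consequently, whenever we play an action $a$ in some round, both opponents respond with the same action in the next round. By induction on rounds, for \emph{any} sequence of our plays the opponent's action sequence is identical under $(M_1,\mathrm{MBR})$ and under $(M_2,\mathrm{MWR})$. Hence any algorithm of ours---whose choices depend only on the observed opponent actions (and its own internal randomness)---generates the very same distribution over our action sequences in the two scenarios.

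With indistinguishability in hand, the impossibility follows from a payoff-cancellation argument. Each pair is individually exploitable if known: the Myopic Best Responder can be beaten in every round after the first $n+1$ by our earlier algorithm, and the Myopic Worst Responder is beaten by the symmetric algorithm (predict its forced response and play the action that beats it). However, fix any round in which we play $x$ and the opponent plays $y$; our payoff is $M_1[x,y]$ in the first scenario and $M_2[x,y]=-M_1[x,y]$ in the second. Summing over all rounds, the total payoff under $(M_2,\mathrm{MWR})$ equals the negation of the total payoff under $(M_1,\mathrm{MBR})$ (in expectation, for a randomized algorithm, since the action sequences agree realization-by-realization). Therefore at least one of the two scenarios yields nonpositive (expected) total payoff, so no algorithm can guarantee a positive average payoff against both, establishing the observation.

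I expect the main thing to get right, rather than a genuine obstacle, is the careful verification that the two scenarios are \emph{perfectly} indistinguishable---that the matching of best responses to worst responses holds for every action and that the shared tie-breaking ordering and first-round conventions leave the opponent's observable behavior literally identical in every round---since the entire argument collapses if a single round's response could differ and reveal which scenario is in play. Handling randomized algorithms requires only the observation that the randomness is independent of the (identical) observations, so coupling the two runs on the same random bits makes the action sequences agree pointwise.
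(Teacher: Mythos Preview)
Your proposal is correct and follows essentially the same construction as the paper: pair rock--paper--scissors with the Myopic Best Responder and reversed rock--paper--scissors with the Myopic Worst Responder, argue indistinguishability, and conclude that no positive average payoff can be guaranteed. Your payoff-cancellation argument (the total payoff in one scenario is the negation of the other, so at least one is nonpositive) is slightly more complete than the paper's per-round remark, which only explicitly addresses the case where we play a best response with respect to one of the two hypotheses; your version cleanly covers arbitrary (and randomized) play.
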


\section{Future Work}
In future work, it would be interesting to further characterize which behaviorally-biased strategies can be exploited to win nearly every round without observing payoffs. In that vein, one clear next step would be to explore the exploitability of behaviorally-motivated probabilistic strategies. In the probabilistic setting it may not be possible to win even most of the time; for example, probabilistic strategies could be regret-minimizing, in which case the best we can hope to achieve is the minimax value of the game. However, we could aim to get bounds on our performance relative to the best possible strategy given full information about the game matrix, payoffs, and the opponent’s strategy. Another direction would be considering more complex games, for example exploring extensive-form games. 



\bibliography{bib}

\begin{thebibliography}{10}

\bibitem{AGARWAL2023119117}
Deepesh Agarwal and Balasubramaniam Natarajan.
\newblock Tracking and handling behavioral biases in active learning frameworks.
\newblock {\em Information Sciences}, 641:119117, 2023.
\newblock URL: \url{https://www.sciencedirect.com/science/article/pii/S0020025523007028}, \href {https://doi.org/10.1016/j.ins.2023.119117} {\path{doi:10.1016/j.ins.2023.119117}}.

\bibitem{auer2002nonstochastic}
Peter Auer, Nicol{\`{o}} Cesa{-}Bianchi, Yoav Freund, and Robert~E. Schapire.
\newblock The nonstochastic multiarmed bandit problem.
\newblock {\em {SIAM} Journal on Computing}, 32(1):48--77, 2002.
\newblock \href {https://doi.org/10.1137/S0097539701398375} {\path{doi:10.1137/S0097539701398375}}.

\bibitem{2011behavioralgttext}
C.F. Camerer.
\newblock {\em Behavioral Game Theory: Experiments in Strategic Interaction}.
\newblock The Roundtable Series in Behavioral Economics. Princeton University Press, 2011.
\newblock URL: \url{https://books.google.com/books?id=cr_Xg7cRvdcC}.

\bibitem{cartwright2018behavioral}
Edward Cartwright.
\newblock {\em Behavioral economics}.
\newblock Routledge, 2018.

\bibitem{predlearninggamesbook}
Nicol{\`{o}} Cesa{-}Bianchi and G{\'{a}}bor Lugosi.
\newblock {\em Prediction, learning, and games}.
\newblock Cambridge University Press, 2006.
\newblock \href {https://doi.org/10.1017/CBO9780511546921} {\path{doi:10.1017/CBO9780511546921}}.

\bibitem{securitygames}
Nicolas Christin.
\newblock Network security games: Combining game theory, behavioral economics, and network measurements.
\newblock In John~S. Baras, Jonathan Katz, and Eitan Altman, editors, {\em Decision and Game Theory for Security}, pages 4--6, Berlin, Heidelberg, 2011. Springer Berlin Heidelberg.
\newblock \href {https://doi.org/10.1007/978-3-642-25280-8_2} {\path{doi:10.1007/978-3-642-25280-8_2}}.

\bibitem{DEVETAG2008364}
Giovanna Devetag and Massimo Warglien.
\newblock Playing the wrong game: An experimental analysis of relational complexity and strategic misrepresentation.
\newblock {\em Games and Economic Behavior}, 62(2):364--382, 2008.
\newblock URL: \url{https://www.sciencedirect.com/science/article/pii/S0899825607001133}, \href {https://doi.org/10.1016/j.geb.2007.05.007} {\path{doi:10.1016/j.geb.2007.05.007}}.

\bibitem{feldman2010playing}
Michal Feldman, Adam Kalai, and Moshe Tennenholtz.
\newblock Playing games without observing payoffs.
\newblock In {\em Innovations in Computer Science - {ICS} 2010, Tsinghua University, Beijing, China, January 5-7, 2010. Proceedings}, pages 106--110. Tsinghua University Press, 2010.
\newblock URL: \url{http://conference.iiis.tsinghua.edu.cn/ICS2010/content/papers/9.html}.

\bibitem{halpernalgrationality}
Joseph~Y. Halpern and Rafael Pass.
\newblock Algorithmic rationality: Game theory with costly computation.
\newblock {\em Journal of Economic Theory}, 156:246--268, 2015.
\newblock URL: \url{https://doi.org/10.1016/j.jet.2014.04.007}, \href {https://doi.org/10.1016/J.JET.2014.04.007} {\path{doi:10.1016/J.JET.2014.04.007}}.

\bibitem{kahneman2011thinking}
Daniel Kahneman.
\newblock {\em Thinking, Fast and Slow}.
\newblock Farrar, Straus and Giroux, New York, 2011.

\bibitem{littlestone1988learning}
Nick Littlestone.
\newblock Learning quickly when irrelevant attributes abound: {A} new linear-threshold algorithm.
\newblock {\em Machine Learning}, 2(4):285--318, 1988.
\newblock \href {https://doi.org/10.1007/BF00116827} {\path{doi:10.1007/BF00116827}}.

\bibitem{learningexploitationbias}
John~M. McNamara, Alasdair~I. Houston, and Olof Leimar.
\newblock Learning, exploitation and bias in games.
\newblock {\em PLOS ONE}, 16(2):1--14, 02 2021.
\newblock \href {https://doi.org/10.1371/journal.pone.0246588} {\path{doi:10.1371/journal.pone.0246588}}.

\bibitem{doiStrategyWinstay}
Martin Nowak and Karl Sigmund.
\newblock A strategy of win-stay, lose-shift that outperforms tit-for-tat in the prisoner’s dilemma game.
\newblock {\em Nature}, 364(6432):56–58, July 1993.
\newblock \href {https://doi.org/10.1038/364056a0} {\path{doi:10.1038/364056a0}}.

\bibitem{Osborne1994}
Martin~J. Osborne and Ariel Rubinstein.
\newblock {\em A course in game theory}.
\newblock The MIT Press, Cambridge, USA, 1994.
\newblock electronic edition.
\newblock URL: \url{https://arielrubinstein.tau.ac.il/books/GT.pdf}.

\bibitem{bams/1183517370}
Herbert Robbins.
\newblock {Some aspects of the sequential design of experiments}.
\newblock {\em Bulletin of the American Mathematical Society}, 58(5):527 -- 535, 1952.
\newblock \href {https://doi.org/10.1090/s0002-9904-1952-09620-8} {\path{doi:10.1090/s0002-9904-1952-09620-8}}.

\bibitem{shalev2012online}
Shai Shalev{-}Shwartz.
\newblock Online learning and online convex optimization.
\newblock {\em Foundations and Trends in Machine Learning}, 4(2):107--194, 2012.
\newblock \href {https://doi.org/10.1561/2200000018} {\path{doi:10.1561/2200000018}}.

\bibitem{sutton2018reinforcement}
Richard~S Sutton and Andrew~G Barto.
\newblock {\em Reinforcement learning: An introduction}.
\newblock MIT press, 2018.

\bibitem{vempalaellipsoidlecturenotes}
Santosh Vempala.
\newblock Algorithmic convex geometry.
\newblock \url{https://faculty.cc.gatech.edu/~vempala/acg/notes.pdf}, 2008.
\newblock [Accessed 06-Sep-2023].

\bibitem{Wang_2014}
Zhijian Wang, Bin Xu, and Hai-Jun Zhou.
\newblock Social cycling and conditional responses in the rock-paper-scissors game.
\newblock {\em Scientific Reports}, 4(1), jul 2014.
\newblock \href {https://doi.org/10.1038/srep05830} {\path{doi:10.1038/srep05830}}.

\end{thebibliography}

\appendix
\section{Appendix}

 \subsection{Win-Stay Lose-Shift Variant: Tie-Stay} \label{sec_tie_stay}

    \begin{algorithm}[H] 
    \caption{Beat the Tie-Stay Variant of Win-Stay, Lose-Shift Opponent \label{alg_tie_stay}}
    \begin{enumerate}
        \item \textit{Learn best responses and the opponent's action order:} Repeat the following $n$ times:
            \begin{itemize}
                \item Play each action in succession until the opponent switches from their current action to a new action. 
                \item Record our last action before the switch as a best response to the opponent's action before the switch. 
                \item Record the order the opponent plays actions as its action ordering (stopping once we've recorded all $n$ actions).
            \end{itemize}
            \textit{Note: at the end of this step, the opponent will have just played the first action in its action order, and we will have just played some action $a$.} 
        \item \textit{Get into a state from which we can begin to predict the opponent's actions:} Repeatedly play action $a$ until the opponent plays the same action, call it $b$, twice in a row. \textit{Note: the next action the opponent plays will also be b.}
        
        \item Play the recorded best response to $b$.
        \item In all remaining rounds:
        \begin{itemize}
            \item Predict the opponent's next action by finding the action immediately following the last action they played in the action ordering we recorded (looping around if at the end).
            \item Play the recorded best response to the predicted action. 
        \end{itemize} 
    \end{enumerate}
    \end{algorithm}

       \begin{theorem}
    Playing Algorithm \ref{alg_tie_stay} against the Tie-Stay variant of the Win-Stay Lose-Shift opponent in a permissible game (Definition $\ref{def:permissible_game}$) wins all but at most $n^2 - n + 2$ rounds.
    
    \end{theorem}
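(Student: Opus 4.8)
The plan is to mirror the structure of the Tie-Shift proof (Algorithm~\ref{alg_tie_shift}), adapting each phase to the fact that the Tie-Stay opponent only shifts after a genuine loss. The key simplification, noted in the main text, is that in Phase~1 a single pass of playing each action in succession suffices to elicit a best response: because the opponent stays on ties and shifts only on losses, the moment we see the opponent switch actions, our \emph{previous} action must have beaten the action the opponent just abandoned, and since a win is the unique positive payoff, that previous action is a best response to it. So I would first argue that Phase~1 simultaneously (i) records a valid best response to each of the $n$ actions the opponent plays, and (ii) records the opponent's true action ordering, since the opponent begins on the first action in its order and each shift moves to the next action in that order.

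The second step is to verify the transition phases put us into a state from which prediction is reliable. At the end of Phase~1 the opponent has just completed its cycle and is sitting on the first action in its ordering while we have just played some action $a$. In Phase~2 I would argue that repeatedly playing the same action $a$ forces the opponent, after at most a bounded number of rounds, to play some action $b$ twice in a row: the opponent can only stay on an action if it is tying or winning against $a$, and since $a$ loses to at least one action, the opponent cannot cycle forever without landing on an action that ties or beats $a$ and hence stays. The crucial observation to establish (and the note in the algorithm asserts it) is that once we observe $b$ played twice consecutively, the opponent will play $b$ \emph{again} next round, because staying twice means $b$ did not lose to $a$, so it will not shift. This lets Phase~3 win by playing the recorded best response to $b$, after which Phase~4 maintains the invariant: we know the opponent just lost, so it shifts to the next action in the (correctly recorded) ordering, which we predict and beat, and this invariant is preserved every subsequent round.

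The third step is the loss/tie accounting to reach the bound $n^2 - n + 2$. Phase~1 makes one full sweep: to elicit each of the $n$ shifts we may play up to $n-1$ non-winning actions before landing on a beating action (each action loses to at least one other, so at most $n-1$ rounds of ties/losses per target action), but I would argue the sweep is shared across targets so the total is roughly $n(n-1) = n^2 - n$ rather than a product of two independent $n$-factors; care is needed here to match the stated constant. Phase~2 and the single round of Phase~3 contribute the small additive slack (the ``$+2$''), and Phases~3--4 onward are all wins. The main obstacle will be pinning down the Phase~1 round count precisely: I must make sure the bookkeeping of how many non-winning rounds can occur across the whole learning sweep (accounting for the fact that after a shift the opponent's new action may again tie or beat our current action) lands exactly on $n^2 - n + 2$ and does not secretly hide an extra factor. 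I would also take care to handle the degenerate possibility that $b$ coincides with an action whose best response we have already recorded, confirming no inconsistency arises.
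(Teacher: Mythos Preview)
Your plan is correct and matches the paper's proof essentially step for step. To resolve the uncertainties you flag: no ``sharing'' argument is needed in Phase~1 --- each of the $n$ iterations is independent, you play at most $n$ actions against the opponent's fixed current action, and exactly one of those rounds (the one where your action beats theirs) is a win, so at most $n-1$ non-wins per iteration gives $n(n-1)$ directly; and the ``$+2$'' comes entirely from Phase~2 (the two consecutive rounds where the opponent plays $b$ while you are still playing $a$), since every earlier Phase~2 round in which the opponent shifted was necessarily a win for you, and Phase~3 is itself already a win.
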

     \begin{proof}
    Recall that the Tie-Stay variant of the Win-Stay Lose-Shift opponent plays the same action immediately following a win or a tie, and switches to the next action in its action ordering immediately following a loss. Because each action is beaten by at least one other action, the opponent must switch after one of the actions we play in response to its current action in phase 1 (since we play each action in succession). If the opponent shifts to play a new action in round $r$, we won round $r-1$, so the best response we record is correct. Since the opponent always shifts to the next action in its action ordering, over $n$ shifts, they will shift through every action before finally shifting back to the first action in its action ordering. Through this process we therefore record a best response to every action, and we record the correct action ordering in step 1. We play at most $n$ actions in response to each action the opponent plays to make it shift, and one of them must beat the opponent's action, so we incur no more than $n(n-1)$ losses or ties in this phase. 

     At the end of the above, the opponent will have shifted back to the first action of their action ordering. However, we don't generally know in advance when this shift will occur, so the action we played (some $a$) may or may not have been a best response to it and we therefore can't be sure whether the opponent will stay or shift. By playing $a$ repeatedly, the opponent will eventually play some action twice; this will happen within $n$ rounds since each action is beaten by at least one other action and ties with itself. However, while the opponent is shifting we must be winning, so we only incur 2 losses or ties here, when the opponent finally plays some action $b$ twice in a row.

     Since we finished round 2 with the opponent tying or winning, they will repeat their action ($b$) again in the next round; we win that round by playing the recorded best response to $b$ (which we showed was correct) in phase 3.
     
     At the start of phase 4, we predict correctly because we won the previous round and know the opponent will shift to the next action in its order (which we showed we recorded correctly). We showed earlier that we recorded correct best responses, so we win by playing the recorded best response to the predicted action. The same conditions hold for every remaining round, so we win all remaining rounds. 
     
     The total the number of losses or ties is bounded by the sum of those in the first two phases, which is at most $n^2 - n + 2$.
     \end{proof}

\subsection{Follow-the-Leader Variant: Limited History} \label{sec_ftl_limited_history}
    Recall that the limited-history variant of the Follow-the-Leader opponent plays the action that would have achieved the highest net payoff against the last $r$ rounds of our play. Note that we assume $r > 0$. If $r = 0$, the opponent would simply play the same action every round (the first action in their action ordering) regardless of our play. In that case, their play would not reveal any information about best responses, so the best we could do is tie in most rounds by playing the same action as the opponent.

    We again use the ellipsoid algorithm for prediction, essentially the same way as described for the unlimited-history variant of this opponent in $\ref{sec_ftl}$. In this case, the inequality we observe when a mistake is made is only relative to the past r rounds.

        \begin{algorithm}[H] 
         \caption{Beat the Limited-History Variant of the Follow-the-Leader Opponent \label{alg_ftl_limited}}
         \begin{enumerate}
             \item \textit{Learn Best Responses:} For $1 \leq i \leq n$:
             \begin{itemize}
                 \item If $i \neq n$, play action $a_i$ $r$ times. Otherwise, play $a_i$ r+1 times. 
                 \item Record the opponent's action during round $ir + 1$ as a best response to action $a_i$. 
             \end{itemize}
             \item In all remaining rounds:
             \begin{itemize}
                 \item Predict the opponent’s next action using the estimate of the game matrix produced by
the ellipsoid algorithm, breaking ties arbitrarily.
                 \item Play the recorded best responses to the predicted action.
             \end{itemize}
         \end{enumerate}
     \end{algorithm}

    \begin{theorem}
        Playing Algorithm \ref{alg_ftl_limited} against the limited-history variant of the Follow-the-Leader opponent in a permissible game (Definition $\ref{def:permissible_game}$) wins all but $O(n^4 \log(nr) + nr)$ rounds, where $r$ is the number of rounds the opponent includes in its limited history.
    \end{theorem}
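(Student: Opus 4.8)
The plan is to follow the same two-phase template as the unlimited-history proof for Algorithm~\ref{alg_ftl}, exploiting the fact that a memory window of length $r$ makes learning best responses far cheaper: instead of having to overwhelm an unbounded history (the source of the $3^n$ term there), it suffices to fill the opponent's entire window with a single action. Concretely, in Phase~1 I would argue that after we play $a_i$ exactly $r$ times in a row, the opponent's last $r$ observations of our play are all $a_i$, so the net payoff it computes for any action $x$ is exactly $r$ times the one-shot payoff of $x$ against $a_i$. Since payoffs lie in $\{1,0,-1\}$, this is maximized by any action that beats $a_i$, and such an action exists because the game is permissible (each action loses to at least one other). Hence the opponent's move in round $ir+1$ beats $a_i$ and is therefore a best response to $a_i$, which is what the algorithm records as $BR(a_i)$; the one extra repetition of $a_n$ simply provides the round ($nr+1$) in which $BR(a_n)$ can be observed, since there is no subsequent block to overlap with. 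This phase spans $nr+1$ rounds and so contributes at most $nr+1$ losses or ties.

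For Phase~2 I would reuse the ellipsoid prediction machinery of Section~\ref{sec_ftl} essentially verbatim, with the single change that the coefficients $c_j$ now count how many times we played $a_j$ within the opponent's current window of the last $r$ rounds rather than over all of history. As before, I would introduce the infinitesimal tie-breaking perturbation of the payoffs so that whenever the opponent actually plays $a_{i^*}$ while we predicted $a_{i'}$, the induced separation-oracle constraint $\sum_{j} c_j m'_{i^* j} > \sum_{j} c_j m'_{i' j}$ is strict; here the perturbation only needs to survive accumulation over $r$ rounds rather than over the whole run, which is precisely why the resulting margin, and hence the mistake bound, stays at $O(n^4 \log(nr))$ with $r$ the window length (invoking the ellipsoid mistake analysis of \ref{sec_ellipsoid_mistake_bounds}). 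On any round in Phase~2 where ellipsoid predicts correctly, we play the recorded (correct) best response to the predicted action and win; each of the remaining rounds is charged to one of the at most $O(n^4\log(nr))$ ellipsoid mistakes. Summing the two phases gives the claimed $O(nr + n^4\log(nr))$ bound on losses or ties.

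The step I expect to be the real work is verifying that the ellipsoid argument transfers cleanly to the sliding-window setting. Two points need care: first, that each mistake still yields a \emph{valid} violated linear constraint even though the coefficient vector $(c_1,\dots,c_n)$ changes from round to round as the window slides, which holds because the constraints need only be simultaneously satisfiable by the single true perturbed matrix, and it is; and second, that bounding the coefficients by $r$ lets one choose the perturbation of size $\Theta(1/r)$, keeping the feasible region's volume large enough that the $\log$ factor is $\log(nr)$ in the window length rather than in the total number of rounds, which is what makes the overall bound polynomial in $n$ and $r$. Once these are pinned down, the correctness of the predict-then-best-respond step in Phase~2 and the final accounting are routine.
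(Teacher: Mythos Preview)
Your proposal is correct and follows essentially the same approach as the paper: a Phase~1 argument that filling the $r$-round window with $a_i$ forces a best response in round $ir+1$ (yielding at most $nr+1$ losses or ties), followed by Phase~2 using the ellipsoid predictor with window-bounded coefficients and the $O(n^4\log(nr))$ mistake bound from \ref{sec_ellipsoid_mistake_bounds}. Your additional remarks about why the sliding-window constraints remain simultaneously satisfiable and why the perturbation need only scale with $r$ are exactly the points the paper's appendix addresses, so nothing is missing.
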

    
    \begin{proof} 
    First, we prove we correctly record a best response to each action in phase 1 of the algorithm. After we play any action $a_i$ $r$ times in a row, the opponent must play a best response to $a_i$ in the following round: the opponent plays the action which would have been best over the previous $r$ rounds in which we played $a_i$ every time, which must be a best response to action $a_i$. Since we play each action $r$ times in a row (except $a_n$ which we play $r+1$ times) and iterate through actions in ascending order, the opponent must play a best response to $a_i$ in round $ir+1$. Note that we play $a_n$ one extra time to allow for observing the best response in round $nr + 1$ before moving on to phase 2.  
    
    Since there are $nr + 1$ rounds in phase 1, we can lose or tie at most that many times in phase 1. Then, since we correctly recorded best responses, we only tie or lose during phase 2 if the ellipsoid algorithm makes a prediction mistake. The ellipsoid algorithm makes at most $O(n^4 \log(nr))$ mistakes, where $r$ is again the history parameter rather than the total number of rounds played (see $\ref{sec_ellipsoid_mistake_bounds}$ for proof). Therefore, the total number of mistakes is bounded by $O(n^4 \log(nr) + nr)$. \end{proof} 

    \subsection{Ellipsoid Mistake Bounds} \label{sec_ellipsoid_mistake_bounds}
    In general, the ellipsoid algorithm requires up to $O(d^2\log \frac{R_{outer}}{R_{inner}})$ iterations, where $d$ is the dimension of the space, $R_{outer}$ is the radius of the initial ellipsoid which encompasses the entire solution space, and $R_{inner}$ is radius of the largest ellipsoid contained in the feasible space \cite{vempalaellipsoidlecturenotes}. In this case, the dimension is $n^2$. Since payoffs are all between -1 and 1, the solution space is contained in an $n^2$-dimensional ball of radius $n$. Finally, we will determine $R_{inner}$, which differs between the Follow-the-Leader opponent and the Highest-Average-Payoff opponent (although we will get overall mistake bounds of the same order in either case). As noted in \ref{sec_ftl}, in order to get strict inequalities, we assume without loss of generality that there are tiny values added to each true payoff value such that there is a slightly higher payoff for tied actions earlier in the opponent’s action ordering. Note that we do not actually add these values, which would require knowledge of the true payoffs and the opponent's action ordering; it is sufficient that such values exist. The appropriate values differ based on the opponent and help us determine $R_{inner}$ in each case. 
    
    \paragraph*{Follow-the-Leader Opponent}
    For the Follow-the-Leader opponent, the values added to each true payoff should be small relative to any actual payoff differences and inversely proportional to the number of rounds so that they cannot accumulate over time to outweigh any actual payoff differences. In the limited history case, they only need to be inversely proportional to the number of rounds which the opponent considers. More concretely, consider adding $\frac{1}{ir}$ to the opponent's payoffs for playing the $i$th action in their action ordering (i.e., to all payoffs in the $i$th row of the game matrix if the opponent is the row player), where $r$ is the total number of rounds for the unlimited-history variant or the length of history considered in the limited-history variant. Then, it's easy to show that the net score of the best action is at least $\frac{1}{n^2 - n}$ higher than any other action. The solution point could therefore shift up to $\frac{1}{r(n^2 - n)}$ (not inclusive) in any direction without altering the induced predictions, so the feasible space contains an $n^2$-dimensional ball of radius $\frac{1}{r(n^2 - n)} - \epsilon$, for some arbitrarily small $\epsilon$. Overall, then, the number of iterations required by ellipsoid in this case is bounded by $O(n^4\log (nr)$).

    \paragraph*{Highest Average Payoff Opponent}
     For the Follow-the-Leader opponent, the values look somewhat different because the scores are averaged as part of the opponent's strategy. Consider adding  $\frac{1}{ir^2}$ to the opponent's payoffs for playing the $i$th action in their action ordering. It's easy to verify that the average score of the best action is at least $\frac{1}{r^2(n^2 - n)}$ higher than that of any other action. The solution point could shift up to $\frac{1}{r^2(n^2 - n)}$ (not inclusive) in any direction without changing the induced predictions, so the feasible space contains an $n^2$-dimensional ball of radius $\frac{1}{r^2(n^2 - n)} - \epsilon$, for some arbitrarily small $\epsilon$. Overall, then, the number of iterations required by ellipsoid in this case is also bounded by $O(n^4\log (nr)$).

    \subsection{Highest Average Payoff Opponent} \label{sec_highest_avg_payoff}
     We assume the opponent initializes each action with an average payoff of 0. 

    For this opponent, our high-level strategy will be to learn a best response to each action, and then use the well-known ellipsoid algorithm to predict the opponent’s actions while playing best responses to the predicted actions.

    The ellipsoid algorithm will be used in essentially the same way as for the Follow-the-Leader opponent, in order to find a point $M'$ in $n^2$-dimensional space roughly corresponding to the game payoff matrix M. The main difference is that the inequality we get whenever there is a mistake will reflect that the opponent makes comparisons between average scores rather than net scores: whenever a mistake is made (i.e., the opponent actually played some action $a_{i^*}$ and we predicted another action $a_{i'}$), we observe that $\frac{\sum_{j \in [n]} c_{i^*j} m'_{i^*j}}{\sum_{j \in [n]} c_{i^*j}} > \frac{\sum_{j \in [n]} c_{i'j} m'_{i'j}}{\sum_{j \in [n]} c_{i'j}}$, where $c_{ij}$ is the number of rounds during which the opponent played $a_i$ and we played $a_j$, and $m'_{ij}$ is the payoff of $a_i$ against $a_j$ according to $M'$. See $\ref{sec_ellipsoid_mistake_bounds}$ for further details.  
    
    \begin{algorithm}[H] 
         \caption{Beat the Highest Average Payoff Opponent \label{alg_highest_avg_payoff}}
         \begin{enumerate}
         \item \textit{Learn best responses to the first $n-1$ actions in the opponent's order:} Repeat the following $n-1$ times: 
         \begin{itemize}
             \item Play each action in succession in ascending order (not starting over when the opponent switches to a new action, wrapping around from $a_n$ to $a_1$ if necessary), each up to as many times as the number of rounds the opponent has played its current action previously plus 1, until the opponent switches or we've played every action once. 
             \item Record our last action prior to the switch as a best response to the opponent's last action before the switch.
         \end{itemize}
         \textit{Note: At the end of this round, the opponent will have just played the final action in their action order, and we will have just played some action a.}
         \item \textit{Learn a best response to the $n$th action in the opponent's order:} Play each action in succession in ascending order, starting with the last action we played $a$ and wrapping around if necessary, each up to 3 times the number of rounds the opponent has played its current action so far (starting with once when the opponent has not played the $n$th action in their order before), until the opponent switches or we've played every action once.  Record our last action as a best response to the $n$th action in the opponent's order.
        \item In all remaining rounds:
        \begin{itemize}
            \item Predict the opponent's next action using the estimate of the game matrix produced by the ellipsoid algorithm, breaking ties arbitrarily.
            \item Play the recorded best response to the predicted action.
        \end{itemize} 
    \end{enumerate}
    \end{algorithm}

     \begin{theorem}
        Playing Algorithm \ref{alg_highest_avg_payoff} against the Highest Average Payoff opponent in a permissible game (Definition $\ref{def:permissible_game}$) wins all but $O(4^{n-1} + n^4\log(nr))$ rounds.
     \end{theorem}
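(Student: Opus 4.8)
The plan is to follow the same three-part template used for the Follow-the-Leader opponent: first drive the opponent through each of its actions so as to record a genuine best response to every action, and then switch to predicting its play with the ellipsoid algorithm while responding optimally. The novelty here is entirely in the best-response-learning phases (Steps 1 and 2 of the algorithm), since the opponent now compares \emph{average} rather than net payoffs, so I would concentrate the argument there.

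First I would analyze Phase 1. The key invariant is that the average payoff of the action the opponent is currently playing stays non-negative until we beat it: playing a tying or losing (for us) action against it can only keep its average at or above $0$, whereas every not-yet-played action sits at average $0$, so the opponent never switches until some action of ours actually beats its current action. Playing the currently-due action $\mathit{count}+1$ times then drives its running total strictly below $0$, forcing the opponent off it and onto the next action in its (tie-breaking) order, which is the highest-averaged remaining action. Hence the opponent marches through its order $b_1, b_2, \dots$, the round immediately preceding each switch is a win, and the action we record is therefore a correct best response; this simultaneously recovers the opponent's action ordering and best responses to $b_1, \dots, b_{n-1}$.

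The crux is Phase 2, where we must elicit a best response to the last action $b_n$. By the time we reach it every other action already carries a \emph{negative} average, so to force a switch we must drive $b_n$'s average strictly below the \emph{largest} of these negatives rather than merely below $0$. This is exactly what the multiplier ``$3\times$ the current count'' buys: playing a beating action that many times pushes $b_n$'s average down by a constant margin (to roughly $-\tfrac12$), which I would show lies below the averages that Phase 1 leaves on the other actions, so the opponent is forced to switch and its last-round action is a true best response to $b_n$. The delicate part of the whole proof is making this quantitative: I must upper-bound how negative the other actions' averages can be coming out of Phase 1 (equivalently, lower-bound how close to $0$ they are) and show the $3\times$ margin strictly beats that bound in every case, including the worst-case orderings and tie-breakings. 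This is where I expect the main difficulty to lie.

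Finally, Phases 1--2 consume only $O(4^{n-1})$ rounds: within Phase 2 each action in the sweep is played three times the opponent's current count, so the count \emph{quadruples} from block to block and the total is dominated by a geometric sum $\Theta(4^{n-1})$ (Phase 1's count-doubling contributes a lower-order $O(n2^n)$). Since by then we hold a correct best response to every action, every remaining round that is \emph{not} an ellipsoid prediction error is a win. Each error supplies a strict inequality on average scores that serves as the separation oracle's violated constraint, and the $R_{\mathrm{inner}}$ bound from the Appendix (using the $\tfrac{1}{ir^2}$ perturbation appropriate for averaged scores) caps the number of errors at $O(n^4\log(nr))$. Summing the two contributions gives the claimed $O(4^{n-1}+n^4\log(nr))$ bound on losses and ties.
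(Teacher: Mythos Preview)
Your plan matches the paper's proof essentially step for step: the paper also shows the opponent marches through its ordering $a_{f(1)},\dots,a_{f(n)}$ in Phase~1, switching exactly when the current action's average drops below $0$ (so the last pre-switch round is a win for us and the recorded action is a genuine best response), then handles $a_{f(n)}$ separately in Phase~2 before invoking the ellipsoid bound. Your round-count arithmetic (doubling in Phase~1 for $O(n2^n)$, quadrupling in Phase~2 for $\Theta(4^{n-1})$) and your use of the averaged-score separation oracle also agree with the paper.

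The step you flag as the main difficulty is indeed the crux, and the paper resolves it with a short contradiction that hinges on an algorithmic detail you did not highlight: in Phase~1 our sweep through $a_1,\dots,a_n$ \emph{does not restart} when the opponent switches actions. If every one of $a_{f(1)},\dots,a_{f(n-1)}$ exited Phase~1 with average exactly $-1$, each must have been beaten on the very first round it appeared; because we never reset the sweep, that would mean a \emph{single} action of ours beat all $n-1$ of them, impossible since every action ties with itself and loses to at least one other. Hence some earlier action retains average at least $-\tfrac12$ (worst case: one tie then one loss), and since $a_{f(n)}$ is last in the tie-break order it suffices to drive its average down to $-\tfrac12$, which is precisely what the $3\times$ multiplier guarantees in the worst case. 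That is the missing lemma; everything else in your outline goes through as written.
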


    \begin{proof} First, we show that we correctly record a best response to each of the first $n-1$ actions in the opponent's action order in phase 1 of the algorithm. Recall that the opponent breaks ties among actions which have the same average payoff using a fixed order over actions, and denote the $i$th action in this order as $a_{f(i)}$ where $f$ maps an action's index in the opponent's ordering to its index in the game. Due to this tie-breaking strategy, we claim that the opponent will begin by playing $a_{f(1)}$ a number of times, followed by $a_{f(2)}$ some number of times and so on through $a_{f(n)}$ some number of times, switching after each of $a_{f(1)}, ..., a_{f(n-1)}$ exactly when it reaches a negative average payoff. Proof: at the start of the game, since all actions are initialized with the same average payoff of 0, the opponent will play $a_{f(1)}$. As long as the average payoff of $a_{f(1)}$ is non-negative, it at least as good as that of any other action, so the opponent will continue to play $a_f(1)$. Once it becomes negative, the opponent will switch to playing $a_f(2)$, since all unplayed actions have an average payoff of 0 and $a_{f(2)}$ is the earliest in the opponent's action ordering. More generally for 
    $a_f(i)$ when $i > 1$, as long as the average payoff of $a_{f(i)}$ remains non-negative, it is strictly better than all actions which precede it in the opponent's action order (every such action must have reached a negative average payoff for the opponent to have played $a_{f(i)}$ the first time, since $a_{f(i)}$ had an initial average payoff of 0 and is later in the action order). It is also at least as good as the average payoff of any actions which follow it in the action order (which all have an average payoff of 0 since they have not yet been played), so the opponent will continue to play  $a_{f(i)}$. As soon as the average payoff of $a_{f(i)}$ becomes negative, if $i < n$, the opponent will switch to the next action in the opponent's action order, which has not yet been played and is therefore the earliest action in the opponent's action order that is tied for the highest average payoff of $0$. 

    The round preceding such a switch must have been a loss for the opponent: its net payoff must have gone down during that round for the average payoff to go from non-negative to negative. We showed that the opponent will proceed through each of the $n$ actions in order when it switches during phase 1, so as long as we can reliably force the opponent to switch actions, we correctly record a best response to each of the first $n-1$ actions in phase 1.  

    To force the opponent to switch in phase 1, we play each action sequentially, up to as many times as the number of rounds the opponent has played its current action previously plus 1, until the opponent switches: we play the first action 1 time, the second 2 times, followed by 4, 8, etc. ($2^{i-1}$ times for the $i$th action). The net score of the opponent's current action is at most the number of times it has played it so far (if it won every previous round), so if our current action beats it, this would be enough to bring its net score to -1 in the worst case. Since at least one action beats the opponent's action, the opponent must switch after one of the actions we play in this succession. This process of getting the opponent to switch actions takes at most $2^n-1$ rounds total, so there are up to that many losses or ties per action, for $(n-1)(2^n -1)$ losses or ties overall in phase 1.

    Next, we show we correctly record a best response to $a_{f(n)}$ in phase 2. When $i=n$, the opponent does not necessarily switch as soon as the average payoff of $a_{f(n)}$ becomes negative; at this point, every action has a negative average payoff, and it is possible that the average payoff of $a_{f(n)}$ becomes negative but remains the highest. However, the opponent will eventually switch from $a_{f(n)}$ if we play an action which beats it enough times: there must exists some other action $a$ which has an average payoff strictly higher than $-1$, so as we play an action which beats $a_{f(n)}$ and its payoff approaches $-1$, it will eventually reach a lower average payoff than $a$ and the opponent will switch. Suppose this was not the case and every other action $a_{f(i)}$ for $i < n$ had an average payoff of exactly $-1$. In that case, the opponent must have switched away from every action $a_{f(i)}$ immediately after the first round, when it was beaten by the action we played in that round. We play each action in succession until the opponent switches, not starting over when the opponent switches, so this must mean the first action we play beat every action $a_{f(i)}$ for $i < n$. However, each action must tie against at least one action (itself) and lose to at least one, so it is not possible that a single action beats $n-1$ of the $n$ actions. 

    The lowest possible $\textit{best}$ payoff for any action besides $a_{f(n)}$ would therefore have been achieved by one tie followed by a loss and switch, for an average payoff of $-\frac{1}{2}$. So in order to get the opponent to switch from $a_n$ to the action with the next highest payoff, we need to make its payoff $\leq -\frac{1}{2}$ in the worst case (since $a_{f(n)}$ is the last action in the opponent's action order, its average payoff reaching exactly the value of the next-highest average payoff would cause a switch). In the worst case, if the opponent won every previous round they played $a_{f(n)}$, we would have to play an action which beat it 3 times the number of times the opponent played it so far to reach a payoff of $-\frac{1}{2}$. This means we play the first action 1 time, the second action 3 times, followed by 12, 48, and so on, which is as indicated in the algorithm. This requires at most $4^{n-1}$ rounds total. 

    Because the opponent switches from $a_{f(n)}$ either when its average payoff goes from non-negative to negative or from negative to lower, its net score must have decreased before the switch. Therefore, the round preceding the switch was a loss for the opponent and we correctly record our action in that round as a best response to $a_{f(n)}$. 
    
    During phase 1, we incur at most $n2^n - 2^n - n + 1$ losses or ties. In phase 2, we incur at most $4^{n-1}$ losses or ties. Then, since we correctly recorded best responses, we only tie or lose during phase 3 if the ellipsoid algorithm makes a prediction mistake. The ellipsoid algorithm makes at most $O(n^4 \log(nr))$ mistakes (see $\ref{sec_ellipsoid_mistake_bounds}$ for proof). Therefore, the total number of losses or ties is bounded by $O(4^{n-1} + n^4\log(nr))$.\end{proof}

\end{document}